\documentclass{article}

\usepackage{microtype}
\usepackage{graphicx}
\usepackage{subcaption}
\usepackage{booktabs} 
\usepackage{enumitem}
\usepackage{hyperref}

\usepackage[preprint]{icml2026}

\usepackage{amsmath}
\usepackage{amssymb}
\usepackage{mathtools}
\usepackage{amsthm}

\usepackage[capitalize,noabbrev]{cleveref}

\theoremstyle{plain}
\newtheorem{theorem}{Theorem}[section]

\newtheorem{lemma}[theorem]{Lemma}

\theoremstyle{definition}
\newtheorem{definition}[theorem]{Definition}

\theoremstyle{remark}

\usepackage[textsize=tiny]{todonotes}
\usepackage{multirow}
\usepackage{pifont}

\usepackage{xcolor}
\usepackage[table]{xcolor}
\usepackage{pifont} 

\icmltitlerunning{Black-box Audio Watermark Removal via Diffusion Priors}

\begin{document}

\twocolumn[
  \icmltitle{Black-box Audio Watermark Removal via Diffusion Priors}



  \icmlsetsymbol{equal}{*}

  \begin{icmlauthorlist}
    \icmlauthor{Lingfeng Yao}{houston}
    \icmlauthor{Xincong Zhong}{waseda}
    \icmlauthor{Chenpei Huang}{houston}
    \icmlauthor{Xuandong Zhao}{ucb}
    \icmlauthor{Hanqing Guo}{hawaii}\\
    \icmlauthor{Aohan Li}{uec}
    \icmlauthor{Jiang Liu}{waseda}
    \icmlauthor{Tomoaki Ohtsuki}{keio}
    \icmlauthor{Miao Pan}{houston}
  \end{icmlauthorlist}

  \icmlaffiliation{houston}{University of Houston}
  \icmlaffiliation{waseda}{Waseda University}
  \icmlaffiliation{ucb}{UC Berkeley}
  \icmlaffiliation{hawaii}{University of Hawaii at Mānoa}
  \icmlaffiliation{keio}{Keio University}
  \icmlaffiliation{uec}{The University of Electro-Communications}


  \icmlkeywords{Machine Learning, ICML}

  \vskip 0.3in
]



\printAffiliationsAndNotice{}  

\begin{abstract}
With the rise of AI-generated audio, watermarking has become widely used for detecting misuse and protecting intellectual property. However, adversaries may try to remove these watermarks, making it critical to evaluate how well watermarking schemes withstand removal attacks. Existing attacks are often impractical: they either noticeably degrade perceptual quality or require access to the watermarking scheme. We propose DiffErase, a black-box watermark removal attack that assumes no knowledge of the target watermarking scheme while maintaining perceptual quality. DiffErase perturbs watermarked audio to an intermediate diffusion noise level and regenerates it using a pretrained denoising model, effectively suppressing watermark signals. Theoretical analysis and extensive experiments demonstrate that inaudible audio watermarks are highly vulnerable: across multiple audio domains, DiffErase consistently removes watermarks while preserving perceptual quality. These findings highlight the need for future audio watermarking designs to consider diffusion-based threats. Code and demos are available at  \href{https://differase.github.io/DiffErase/}{https://differase.github.io/DiffErase/}.
\end{abstract}

\section{Introduction}
Audio generative models can now produce highly realistic audio that is indistinguishable to ordinary listeners. This technology reduces the cost of audio creation and enables many beneficial applications, but it also raises security risks related to misinformation, fraud, and public distrust. To mitigate these risks, proactive defenses such as watermarking~\citep{roman2024proactive, singh2024silentcipher, liu2024groot, huang2025echomark, chen2023wavmark, liu2023dear, yang2026fingerprinting, yang2026liteguard, liu2025securing} have been proposed to support governance of AI-generated content. By embedding an imperceptible signal into generated audio, a watermark provides a provenance cue that can be verified later, enabling detection of AI-generated audio.

Nevertheless, motivated adversaries may try to remove or disable watermarks to evade such verification. This makes it necessary to evaluate the robustness of watermarking schemes against removal attacks, ensuring they can withstand real-world threats. Prior work has studied attacks against audio watermarking. Some works~\citep{wen2025sok,o2025deep} apply signal-level transformations (e.g., additive noise or pitch shifting) or physical-level operations (e.g., re-recording) to disable detection. However, effective removal under these operations often introduces audible distortion, limiting their practicality. Other works explore more targeted strategies. Adversarial attacks~\citep{liu2024audiomarkbench, li2025harmonicattack} optimize perturbations added to the watermarked waveform to disable detection, but they typically require query access to detector output. Overwriting attacks~\citep{yao2025yours,liu2023detecting} embed an additional watermark to interfere with or replace the original, but they often depend on knowledge of the target watermarking system, such as the embedding architecture. These limitations raise a natural question: \emph{Can an adversary remove audio watermarks in a black-box setting while preserving perceptual quality?}

\begin{figure*}[!t]
\centering
\includegraphics[width=0.72\textwidth]{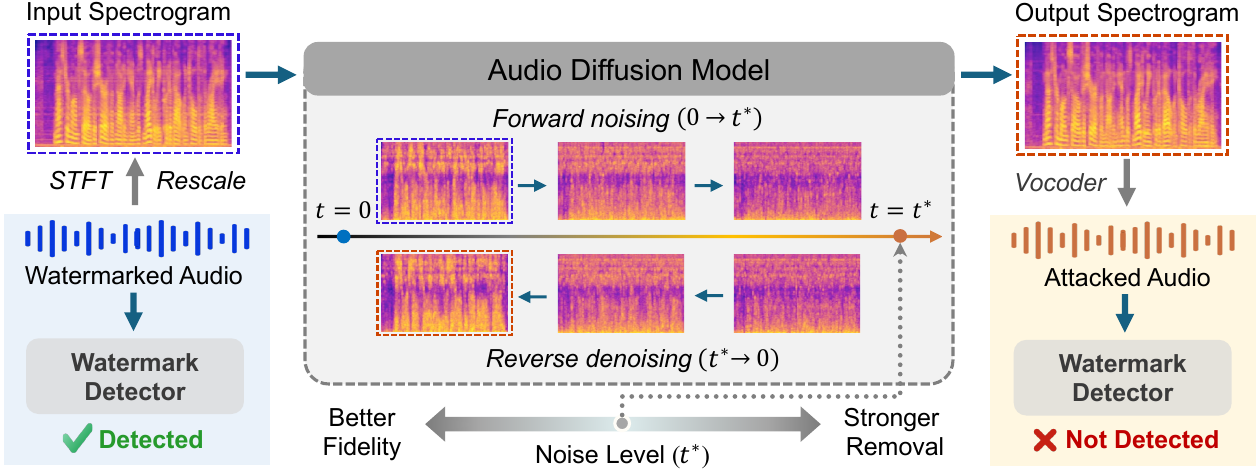}
\caption{\textbf{Overview of DiffErase.} Watermarked audio is converted to a mel-spectrogram via STFT, perturbed to an intermediate noise level $t^*$ via forward noising, and then denoised back to $t=0$. A vocoder reconstructs the attacked audio, which evades watermark detection. The noise level $t^*$ controls the trade-off between removal strength and perceptual fidelity.}
\label{fig:framework}
\end{figure*}

In this work, we investigate diffusion-based regeneration as a black-box watermark removal strategy. The key insight is that watermarking adds imperceptible perturbations that shift samples away from the natural data distribution, which detectors are trained to recognize. Diffusion models, trained to denoise corrupted inputs back to the data manifold, may naturally suppress these structured perturbations while preserving perceptual quality. Importantly, this approach requires no knowledge of the watermarking scheme, making it fully black-box. This intuition is supported by recent work on regeneration attacks against image watermarks~\citep{saberi2024robustness,zhao2024invisible} and diffusion-based defenses against adversarial examples~\citep{nie2022DiffPure, wu2023defending, guo2024wavepurifier}, where diffusion suppresses imperceptible perturbations. However, whether diffusion priors can enable black-box watermark removal for audio watermarking while preserving quality remains unexplored. 

Extending diffusion-based regeneration from images to audio requires careful design choices, as audio admits multiple representations with different reconstruction constraints. A direct application of waveform-level diffusion~\citep{kong2020diffwave} leads to over-smoothed content and temporal drift, degrading perceptual similarity. Alternatively, one can diffuse a linear spectrogram and invert it using the original phase~\citep{guo2024wavepurifier}, but the regenerated magnitude may be inconsistent with the preserved phase, producing audible artifacts. \emph{These issues make it challenging to obtain both strong watermark removal and high fidelity preservation}.

We propose DiffErase, a diffusion-based watermark removal attack operating in a strictly black-box setting. As illustrated in Figure~\ref{fig:framework}, DiffErase converts watermarked audio to a mel-spectrogram using STFT, perturbs it to an intermediate noise level $t^\ast$ via the forward noising process, and applies a pretrained denoising model to regenerate the mel-spectrogram back to $t=0$. A neural vocoder then reconstructs the waveform from the regenerated mel-spectrogram. The mel-spectrogram representation preserves salient perceptual structure such as energy contours and temporal envelopes, while enabling seamless integration with modern neural vocoders for high-quality reconstruction. We implement two complementary diffusion backbones: (i) mel-spectrogram diffusion operating directly in the mel-spectrogram domain, and (ii) latent diffusion works in a learned latent space from mel-spectrogram features. 

To understand why DiffErase succeeds, we provide a manifold-based analysis that models watermark embedding as an off-manifold perturbation. We show that diffusion reverse dynamics contract the watermark residue along the denoising trajectory, with an exponential decay bound controlled by the noise level. We evaluate DiffErase across three audio domains (speech, music, and environmental sounds) against five state-of-the-art watermarking systems (AudioSeal~\citep{roman2024proactive}, TimbreWM~\citep{liu2023detecting}, WavMark~\citep{chen2023wavmark}, Perth~\citep{resembleai_perth_2025}, and SilentCipher~\citep{singh2024silentcipher}). Our results demonstrate that DiffErase consistently disables watermark detection while maintaining high perceptual quality, highlighting diffusion-based regeneration as a practical threat that must be addressed in future watermark designs. 

Our contribution can be summarized as follows:
\begin{itemize}[leftmargin=*, itemsep=0pt, topsep=0pt]
    \item We propose DiffErase, a black-box diffusion-based attack that removes neural audio watermarks without requiring knowledge of the watermarking schemes.
    \item We provide a theoretical analysis explaining why diffusion dynamics suppress inaudible watermarks, with formal bounds on watermark contraction.
    \item We conduct extensive experiments demonstrating that DiffErase achieves strong removal performance across diverse audio domains and watermarking systems while preserving perceptual quality.
\end{itemize}

\section{Related Work}
\paragraph{Neural audio watermarking.}
Audio watermarking embeds a hidden signal into an audio carrier for downstream attribution, such as copyright verification and provenance tracking. A typical system consists of an \emph{embedder} that inserts a watermark (optionally carrying a message) and a \emph{detector} that verifies and/or recovers it. Practical methods must satisfy two key requirements: (i) \emph{fidelity}, meaning the watermark is imperceptible to listeners, and (ii) \emph{robustness}, meaning the watermark survives common manipulations. 

Traditional audio watermarking relies on handcrafted embedding and detection rules, making it difficult to balance fidelity and robustness. Recent neural approaches adopt end-to-end training, jointly optimized embedding, detection, fidelity, and robustness. Existing methods mainly differ in their embedding strategies. AudioSeal~\citep{roman2024proactive} and SilentCipher~\citep{singh2024silentcipher} employ a generator--detector design and embed watermarks into learned representations. WavMark~\citep{chen2023wavmark} and IDEAW~\citep{li2024ideaw} use invertible neural networks to model watermark embedding and detection as reversible transformations. TimbreWM~\citep{liu2023detecting} and DeAR~\citep{liu2023dear} embed watermark information in the frequency domain. While these methods demonstrate strong robustness against common distortions, their vulnerability to diffusion-based removal attacks remains largely unexplored.
\vspace{-2mm}

\paragraph{Attacks on audio watermarking.}
Existing attacks vary in their assumptions and effectiveness. Signal-level transformations~\citep{wen2025sok}, such as pitch shifting or lossy compression, require no knowledge of the watermark scheme but often degrade audio quality when applied strongly enough to remove watermarks. Overwriting attacks~\citep{yao2025yours} embed a second watermark to interfere with the original, but they typically assume knowledge of the target watermarking architecture. Adversarial attacks~\citep{liu2024audiomarkbench} formulate watermark removal as an optimization problem, iteratively crafting perturbations that fool the detector while preserving perceptual quality. However, they require query access to the detector, which may be impractical when the watermarking scheme is private. These limitations motivate our work: a removal attack that preserves perceptual quality under minimal assumptions.
\vspace{-2mm}

\paragraph{Diffusion models for audio.}
Diffusion models~\citep{ho2020denoising, song2020denoising} define a forward noising process and a learned reverse denoising process that maps noise back to the data distribution. DiffWave~\citep{kong2020diffwave} applies this framework to generate high-quality audio from mel-spectrograms and is commonly used as a neural vocoder. Conditional diffusion extends this paradigm to controlled generation, including text-to-audio~\citep{liu2023audioldm, yang2023diffsound} and image-to-audio synthesis~\citep{huang2023make}. Beyond generation, diffusion models have also been applied to audio restoration, reconstructing missing or corrupted content based on learned priors~\citep{wang2023audit}. Our work builds on diffusion-based regeneration, which has been explored as both a defense and an attack strategy. As a defense, DiffPure~\citep{nie2022DiffPure} and its audio-domain variants~\citep{wu2023defending, guo2024wavepurifier} leverage diffusion denoising to suppress adversarial perturbations. As an attack, \citet{zhao2024invisible} demonstrate that regeneration can remove invisible image watermarks by projecting perturbed samples back onto the natural data manifold. These findings suggest diffusion regeneration can suppress low-magnitude, imperceptible signals. However, prior work focuses exclusively on the image domain; whether this vulnerability extends to audio watermarking remains unexplored.

\vspace{-2mm}

\section{Problem Formulation and Threat Model}
\subsection{Problem formulation}
Let $x \in \mathbb{R}^T$ denote a clean audio waveform. An audio watermarking system consists of an embedding algorithm and a detection (or extraction) algorithm.

\paragraph{Embedding.} 
Given an audio signal $x$, an optional message $m$, and a secret key $k \in \mathcal{K}$ (equivalently, secret model parameters), the embedder outputs a watermarked waveform:
\begin{equation}
    x_w = \mathsf{Embed}(x, m, k) \in \mathbb{R}^T.
\end{equation}
For message-based schemes, $m \in \{0,1\}^L$ is an $L$-bit payload. Presence-only schemes embed a zero-bit watermark for detection, without carrying an explicit message.

\paragraph{Detection and extraction.}
For message-based schemes, an extractor recovers the embedded message:
\begin{equation}
    \widehat{m} = \mathsf{Ext}(x, k) \in \{0,1\}^L.
\end{equation}
If $x$ is watermarked with payload $m$, then $\widehat{m}$ should match $m$; otherwise $\widehat{m}$ is a random bit string. 

For presence-only schemes, a detector outputs a detection score indicating watermark presence:
\begin{equation}
    s = \mathsf{Det}(x,k) \in [0,1].
\end{equation}
A binary decision is obtained by thresholding:
\begin{equation}
    \mathsf{Result} = \mathbb{I}[s \ge \gamma] \in \{0,1\},
\end{equation}
where $\mathsf{Result}=1$ indicates watermark presence, $\mathsf{Result}=0$ indicates absence, and $\gamma$ is a detection threshold.

In this study, we evaluate our attack on five state-of-the-art open-source neural audio watermarking systems: AudioSeal~\citep{roman2024proactive}, TimbreWM~\citep{liu2023detecting}, WavMark~\citep{chen2023wavmark}, Perth~\citep{resembleai_perth_2025}, and SilentCipher~\citep{singh2024silentcipher}. 

\subsection{Threat model}
\paragraph{Adversary's capabilities.}
We consider a \emph{black-box} adversary who:
(i) has access only to the watermarked audio $x_w$;
(ii) has no knowledge of the watermarking scheme (architecture, weights, or hyperparameters), and cannot query the detector or extractor;
(iii) has sufficient computational resources to process $x_w$ using a diffusion model.

\paragraph{Adversary's objective.}
The adversary aims to produce a modified audio $\widehat{x}$ from $x_w$, disabling watermark verification while preserving perceptual quality. Formally, the attack succeeds if either
\begin{equation}
    \mathsf{Result}(\widehat{x},k)=0 \quad \text{or} \quad \mathrm{Acc}\big(\mathsf{Ext}(\widehat{x},k), m\big) < \eta,
\end{equation}
where $\mathrm{Acc}(\cdot,\cdot)$ denotes bit accuracy and $\eta$ is a threshold below which message recovery is considered failed. Meanwhile, perceptual quality should be maintained:
\begin{equation}
    \mathcal{Q}(\widehat{x}, x_w) \le q_0,
\end{equation}
where $\mathcal{Q}(\cdot,\cdot)$ is a perceptual distance metric and $q_0$ is a threshold beyond which degradation becomes perceptible.

\section{DiffErase Attack}
\subsection{Preliminaries: audio diffusion models}
Let $x_0 \in \mathbb{R}^d$ denote an audio representation. Diffusion models~\citep{ho2020denoising,song2020denoising} consist of (i) a fixed forward process that gradually adds noise and (ii) a learned reverse process that reconstructs data from noise.

\paragraph{Forward process.}
The forward process is a fixed Markov chain with variance schedule $\{\beta_t\}_{t=1}^N$. Let $\alpha_t\triangleq 1-\beta_t$ and $\bar{\alpha}_t\triangleq \prod_{s=1}^{t}\alpha_s$. The transition is
\begin{equation}
    q(x_t\mid x_{t-1})=\mathcal{N}\!\left(x_t;\sqrt{\alpha_t}\,x_{t-1},\,\beta_t \mathbf{I}\right),
\end{equation}
which yields a closed-form marginal distribution:
\begin{equation}
    q(x_t\mid x_0)=\mathcal{N}\!\left(x_t;\sqrt{\bar{\alpha}_t}\,x_0,\,(1-\bar{\alpha}_t)\mathbf{I}\right).
\end{equation}
Equivalently, $x_t$ can be sampled via reparameterization:
\begin{equation}
    x_t=\sqrt{\bar{\alpha}_t}\,x_0+\sqrt{1-\bar{\alpha}_t}\,\epsilon,
    \qquad \epsilon\sim\mathcal{N}(\mathbf{0},\mathbf{I}).
    \label{eq:reparam}
\end{equation}
As $t$ increases, $x_t$ approaches an isotropic Gaussian.

\paragraph{Reverse process.}
The reverse process is modeled as a learned Markov chain:
\begin{equation}
    p_\theta(x_{t-1}\mid x_t)=\mathcal{N}\!\left(x_{t-1};\mu_\theta(x_t,t),\,\Sigma_t\right),
\end{equation}
where $\Sigma_t$ is typically fixed. A standard parameterization predicts the forward noise using a neural network $\epsilon_\theta(x_t,t)$:
\begin{equation}
    \mu_\theta(x_t,t)=\frac{1}{\sqrt{\alpha_t}}
    \left(
        x_t-\frac{\beta_t}{\sqrt{1-\bar{\alpha}_t}}\,\epsilon_\theta(x_t,t)
    \right).
    \label{eq:reverse}
\end{equation}

The model is trained to minimize: 
\begin{equation}
    \mathcal{L}
    =
    \mathbb{E}_{x_0,\epsilon,t}
    \big[
        \|\epsilon-\epsilon_\theta(x_t,t)\|_2^2
    \big],
\end{equation}

\subsection{DiffErase: diffusion-based watermark removal}
We propose DiffErase, a diffusion-based attack that removes audio watermarks using a pre-trained diffusion model as a generative prior. Unlike standard diffusion-based generation, which starts from pure Gaussian noise, DiffErase follows an SDEdit-style procedure~\citep{meng2021sdedit} with two stages: (i) \emph{diffusion erasure}, which perturbs watermarked audio to an intermediate noise level, and (ii) \emph{semantic reconstruction}, which applies the learned reverse dynamics to recover a clean signal. Crucially, DiffErase requires no knowledge of the target watermarking systems.

A key challenge is selecting an appropriate audio representation for diffusion. Waveform-level diffusion produces over-smoothed content and temporal drift. Linear spectrogram diffusion generates magnitude inconsistent with the original phase, producing audible artifacts (see Appendix~\ref{apx:ablation_on_representation} for empirical comparisons). In contrast, mel-spectrograms capture salient structure such as energy contours and temporal envelopes while enabling high-quality reconstruction using modern vocoders, and we adopt this representation throughout. We model watermarked audio as $x_w = x_0 + \delta$, where $x_0$ is the clean audio and $\delta$ is the watermark perturbation.

\paragraph{Phase I: diffusion erasure (forward noising).}
We diffuse $x_w$ to an intermediate step $t^\ast\in\{1,\dots,N\}$ using the closed-form forward marginal in \eqref{eq:reparam}:
\begin{equation}
    x_{t^\ast}
    =
    \sqrt{\bar{\alpha}_{t^\ast}}\,x_w
    +
    \sqrt{1-\bar{\alpha}_{t^\ast}}\,\epsilon,
    \qquad
    \epsilon\sim\mathcal{N}(\mathbf{0},\mathbf{I}).
    \label{eq:diff_erase_forward}
\end{equation}
The hyperparameter $t^\ast$ controls a trade-off between watermark removal and reconstruction fidelity: a larger $t^\ast$ injects more noise, which better suppresses structured watermark signals but makes reverse reconstruction more challenging.

\paragraph{Phase II: semantic reconstruction (reverse denoising).}
Starting from $x_{t^\ast}$, we apply the reverse sampler induced by the pre-trained diffusion model from $t^\ast$ to $0$:
\begin{equation}
    x_{t-1}
    =
    \frac{1}{\sqrt{\alpha_t}}
    \left(
        x_t
        -
        \frac{\beta_t}{\sqrt{1-\bar{\alpha}_t}}
        \epsilon_\theta(x_t,t)
    \right)
    +
    \sigma_t z_t,
    \label{eq:diffusion_reverse}
\end{equation}
where $z_t\sim\mathcal{N}(\mathbf{0},\mathbf{I})$. We denote the final reconstruction by $\widehat{x}_0$. Since $\epsilon_\theta$ is trained on clean audio, the reverse dynamics tend to move samples toward high-density regions of the clean data distribution and suppress off-distribution perturbations introduced by watermarking. The complete procedure can be written as
\begin{equation}
\textsc{DiffErase}(x_w,t^\ast)
\triangleq
\textsc{Rev}(x_{t^\ast};\,t^\ast\!\rightarrow\!0),
\label{eq:differase_def}
\end{equation}
where $\textsc{Rev}(\cdot;\,t^\ast\!\rightarrow\!0)$ denotes reverse sampling as in \eqref{eq:diffusion_reverse}.

\paragraph{Instantiations.}\
We implement two variants of DiffErase, both operating on mel-spectrograms:

(i) \emph{Mel-spectrogram diffusion.} We apply DiffErase directly in the mel-spectrogram domain and reconstruct the waveform with a neural vocoder:
\begin{equation}
    \widehat{x}_0 = \textsc{Voc}\big(\textsc{DiffErase}(\textsc{Mel}(x_w), t^\ast)\big),
\end{equation}
where $\textsc{Mel}(\cdot)$ converts a waveform to a mel-spectrogram and $\textsc{Voc}(\cdot)$ inverts it back to a waveform.

(ii) \emph{Latent diffusion.} Following~\citet{rombach2022high}, we encode mel-spectrograms into a learned latent space using a pretrained variational autoencoder $(\textsc{Enc},\textsc{Dec})$, apply \textsc{DiffErase} in the latent space, and decode back:
\begin{equation}
\widehat{x}_0 = \textsc{Voc}\big(\textsc{Dec}(\textsc{DiffErase}(\textsc{Enc}(\textsc{Mel}(x_w)), t^\ast))\big).
\end{equation}
Latent diffusion reduces computational cost while achieving comparable attack performance. We compare these two variants in Section~\ref{exp:results_and_anlaysis}.

\subsection{Theoretical analysis}
\label{sec:theorectical_analysis}
We provide a theoretical justification for DiffErase. We model an imperceptible watermark as a small perturbation that moves the signal off the clean-audio manifold, and show that the diffusion prior suppresses these off-manifold components along the reverse trajectory.

\begin{definition}[$\Delta$-imperceptibility]
Let clean audio $x\in\mathbb{R}^{d}$ lie on a low-dimensional manifold $\mathcal{M}\subset\mathbb{R}^{d}$. Watermarked audio is constructed as $x_w = x + \delta$, where the perturbation satisfies $\|\delta\|_2\le \Delta$ for a small constant $\Delta > 0$ to ensure imperceptibility.
\end{definition}

\begin{definition}[$\tau$-margin detection]
Let $S(\cdot;k)\in\mathbb{R}$ denote a key-dependent detection statistic with key $k$. A watermark is \emph{detected} if $S(x_w;k)\ge\tau$ for threshold $\tau>0$. For processed audio $\widehat{x}$ derived from a watermarked input, we say the watermark is \emph{removed} if $S(\widehat{x};k)<\tau$.
\end{definition}

\paragraph{Off-manifold structure.}
Following~\citet{gilmer2018adversarial, stutz2019disentangling}, small perturbations that do not alter semantics tend to be orthogonal to the data manifold's tangent space $\mathcal{T}_x\mathcal{M}$.
While watermarks contain structure to enable decoding, they are designed to be imperceptible and statistically distinct from the natural audio distribution. 
We therefore decompose the watermark perturbation $\delta = \delta_{\parallel} + \delta_{\perp}$,  where $\delta_{\parallel} \in \mathcal{T}_x\mathcal{M}$ and $\delta_{\perp} \perp \mathcal{T}_x\mathcal{M}$, and assume the off-manifold component dominates, i.e., $\|\delta_{\perp}\|_2 \gg \|\delta_{\parallel}\|_2$.

\paragraph{Diffusion dynamics.}
We analyze the attack process by coupling the watermarked trajectory with a reference clean trajectory. The forward phase diffuses $x_w$ to an intermediate timestep $t^\ast$, which scales the signal by $\sqrt{\bar{\alpha}_{t^\ast}}$ and injects Gaussian noise. The reverse phase follows the deterministic dynamics given by the probability flow ODE with score model $s_\theta(x,t)\approx \nabla_x \log p_t(x)$, where $p_t$ is the marginal distribution at diffusion time $t$.

\begin{lemma}[Score restores off-manifold deviations]
\label{lem:direction}
Assume the manifold hypothesis and a local Gaussian approximation of $p_t$, the score function points towards the high-density region. For a watermarked state $x_t$ with off-manifold component $\Pi_{\perp}(x_t)\delta$, there exists $c_t>0$ such that
\begin{equation}
\langle s_\theta(x_t,t), \Pi_{\perp}(x_t)\delta\rangle \le -c_t \|\Pi_{\perp}(x_t)\delta\|_2^2,
\label{eq:off_manifold}
\end{equation}
\end{lemma}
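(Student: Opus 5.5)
Under the local Gaussian approximation, the plan is to write the score of the noised marginal explicitly and then read off its component along the normal direction. Near a point $\bar x\in\mathcal{M}$, the marginal $p_t$ is modeled as a Gaussian centered at the scaled manifold point $\sqrt{\bar\alpha_t}\,\bar x$. Its covariance splits into a tangential part and a normal part:
\[
\Sigma_t=\Pi_{\parallel}\,(\bar\alpha_t\Sigma_{\parallel}+(1-\bar\alpha_t)\mathbf{I})\,\Pi_{\parallel}+(1-\bar\alpha_t+\bar\alpha_t\sigma_{\perp}^2)\,\Pi_{\perp},
\]
where $\Sigma_{\parallel}$ is the data spread along $\mathcal{T}_{\bar x}\mathcal{M}$ and $\sigma_\perp^2\approx 0$ is the residual thickness of the data distribution normal to the manifold. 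With $s_\theta\approx\nabla\log p_t$, the score is then $s_\theta(x_t,t)=-\Sigma_t^{-1}(x_t-\sqrt{\bar\alpha_t}\,\bar x)$.

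Next, I would couple the watermarked trajectory to the clean reference trajectory. Write $x_t=\sqrt{\bar\alpha_t}(\bar x+\delta)+\sqrt{1-\bar\alpha_t}\,\epsilon$, and apply the same decomposition to the noise, $\epsilon=\epsilon_\parallel+\epsilon_\perp$. Projecting the score onto the normal space, and using that $\Pi_\perp$ commutes with $\Sigma_t^{-1}$ because the covariance is block-diagonal, gives
\[
\Pi_\perp s_\theta=-\lambda_t^{-1}\bigl(\sqrt{\bar\alpha_t}\,\Pi_\perp\delta+\sqrt{1-\bar\alpha_t}\,\epsilon_\perp\bigr),\qquad \lambda_t=1-\bar\alpha_t+\bar\alpha_t\sigma_\perp^2 .
\]
Taking the inner product with $\Pi_\perp\delta$ produces a negative quadratic term $-\lambda_t^{-1}\sqrt{\bar\alpha_t}\|\Pi_\perp\delta\|^2$ plus a cross term with $\epsilon_\perp$.

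The main obstacle is this cross term, because a single noise draw can be aligned with $\delta$. I would resolve it in one of two ways.
\begin{itemize}
\item Interpret the statement in the coupled sense: the watermarked and clean trajectories share the same $\epsilon$. Then the quantity that matters is the difference of scores, $s_\theta(x_t)-s_\theta(x_t^{\mathrm{clean}})$, and the noise term cancels exactly.
\item Take the expectation over $\epsilon$, which makes the cross term vanish.
\end{itemize}
Either route gives the bound with $c_t=\sqrt{\bar\alpha_t}/\lambda_t>0$. I would adopt the coupled-difference reading, since that is how the dynamics paragraph sets up the analysis.

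Finally, the assumption $\|\delta_\perp\|\gg\|\delta_\parallel\|$ handles the approximations. Linearizing the manifold at $\bar x$ introduces curvature error of order $\|\delta\|^2$, and tangential leakage from a slightly misaligned $\Pi_\perp(x_t)$ contributes of order $\|\delta_\parallel\|\,\|\Pi_\perp\delta\|$. Both are absorbed by shrinking $c_t$ slightly, provided $\Delta$ is small relative to the manifold's reach. The error $s_\theta-\nabla\log p_t$ is handled the same way, assuming it is small compared with $c_t\|\Pi_\perp\delta\|$.
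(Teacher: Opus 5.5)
Your argument is sound at the heuristic level the paper works at, but it takes a more explicit route. The paper starts from the generic approximation $\Pi_{\perp}(x_t)s_\theta(x_t,t)\approx -\sigma_t^{-2}\bigl(x_t-\Pi_{\mathcal{M}}(x_t)\bigr)$. It then simply posits $x_t-\Pi_{\mathcal{M}}(x_t)\approx \Pi_{\perp}(x_t)\delta$, i.e.\ it treats the watermark as the entire normal deviation of the current state. This gives the pointwise inequality exactly as stated, with $c_t=1/\sigma_t^2$, and all errors are absorbed by ``replacing $\approx$ with $\le$''.

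Your block-diagonal Gaussian model exposes what that step discards: the forward scaling $\sqrt{\bar{\alpha}_t}$ and the normal noise $\sqrt{1-\bar{\alpha}_t}\,\epsilon_{\perp}$. Your concern about the cross term is well founded. Since $\langle \epsilon_{\perp},\Pi_{\perp}\delta\rangle\sim\mathcal{N}(0,\|\Pi_{\perp}\delta\|_2^2)$, for small $\Delta$ the left-hand side is positive with non-negligible probability, and then no $c_t>0$ works for that draw.

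What your route buys:
\begin{itemize}
\item Transparency about exactly which assumption removes the noise term.
\item A constant $c_t=\sqrt{\bar{\alpha}_t}/\lambda_t$ that agrees with the paper's $1/\sigma_t^2$ once you measure against the actual residue $\sqrt{\bar{\alpha}_t}\delta$ rather than $\delta$; your $\lambda_t$ plays the role of their $\sigma_t^2$.
\item The coupled-difference inequality you prove is precisely what Lemma~\ref{lem:onestep} consumes, through $s_\theta(x_t,t)-s_\theta(x_t^{\mathrm{clean}},t)\approx -c_t r_t$. In your Gaussian model this difference is exactly $-\Sigma_t^{-1}r_t$ for any coupled pair, so it transfers to reverse-time states with $\sqrt{\bar{\alpha}_t}\delta$ replaced by $r_t$, and nothing downstream is lost.
\end{itemize}

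The cost is that you are not proving the lemma as literally written. You should say explicitly that you establish the coupled (or $\epsilon$-averaged) version. Recovering the pointwise version requires importing the paper's extra assumption that the normal deviation of $x_t$ is dominated by the watermark.

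Also, in the coupled reading your model-error hypothesis should bound the difference $(s_\theta-\nabla\log p_t)(x_t)-(s_\theta-\nabla\log p_t)(x_t^{\mathrm{clean}})$, not the error at a single point.
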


\noindent
\emph{Proof sketch.} 
Since $p_t$ concentrates near $\mathcal{M}$, the marginal distribution in the normal direction approximates a Gaussian centered on the manifold. Consequently, the score function acts as a linear restoring force opposing off-manifold deviations. A detailed proof is provided in Appendix~\ref{app:proof_lemma_direction}.

\begin{lemma}[One-step contraction of watermark residue]
\label{lem:onestep}
Let $x_t$ and $x_t^{\mathrm{clean}}$ denote two coupled reverse trajectories initialized from the watermarked and clean states, respectively. Define the watermark residue at time $t$ as $r_t \triangleq x_t - x_t^{\mathrm{clean}}$.
Under Lemma~\ref{lem:direction} and the assumption that residue is dominated by its off-manifold component, there exists a contraction factor $\rho_t\in(0,1)$ such that
\begin{equation}
    \|r_{t-1}\|_2 \le \rho_t \|r_t\|_2 .
    \label{eq:onestep}
\end{equation}
\end{lemma}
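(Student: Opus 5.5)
We will prove the bound by writing one reverse step as a perturbed linear map and reading $\rho_t$ off its contraction.

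\textbf{Step 1: one-step update.} Discretize the probability flow ODE (equivalently, the deterministic DDIM-type step with the noise $z_t$ shared between the two coupled trajectories) as
\begin{equation*}
x_{t-1} = a_t x_t + b_t\, s_\theta(x_t,t),
\end{equation*}
with $a_t = 1/\sqrt{\alpha_t}$ and $b_t = \beta_t/\sqrt{\alpha_t} > 0$. This follows from \eqref{eq:diffusion_reverse} via $s_\theta = -\epsilon_\theta/\sqrt{1-\bar\alpha_t}$. Subtracting the same update for $x_t^{\mathrm{clean}}$ cancels the shared noise term and gives
\begin{equation*}
r_{t-1} = a_t r_t + b_t\big(s_\theta(x_t,t)-s_\theta(x_t^{\mathrm{clean}},t)\big).
\end{equation*}

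\textbf{Step 2: linearize the score.} For small $\|r_t\|_2$ (recall $\|\delta\|_2\le\Delta$, and the forward phase only rescales it by $\sqrt{\bar\alpha_{t^\ast}}$), write
\begin{equation*}
s_\theta(x_t,t)-s_\theta(x_t^{\mathrm{clean}},t) = H_t r_t + e_t, \qquad \|e_t\|_2 \le L\|r_t\|_2^2,
\end{equation*}
where $H_t$ is the score Jacobian. Under the local Gaussian approximation used in Lemma~\ref{lem:direction}, $H_t$ is approximately $-\Pi_\perp/\sigma_{\perp,t}^2$ in the normal directions, and Lemma~\ref{lem:direction} gives $\langle H_t u, u\rangle \le -c_t\|u\|_2^2$ for normal vectors $u$. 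In the tangent directions, $H_t$ has eigenvalues of order $-1/(1-\bar\alpha_t)$ from the isotropic noise, and possibly weaker curvature.

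\textbf{Step 3: split into components and expand the norm.} Write $r_t = r_\perp + r_\parallel$ with $\|r_\parallel\|_2 \le \kappa\|r_t\|_2$, where $\kappa$ is small by the dominance assumption. Then expand
\begin{equation*}
\|r_{t-1}\|_2^2 = a_t^2\|r_t\|_2^2 + 2a_tb_t\langle H_t r_t, r_t\rangle + b_t^2\|H_t r_t\|_2^2 + (\text{terms in } e_t).
\end{equation*}
The cross term is at most $-2a_tb_t c_t\|r_\perp\|_2^2 + O(\kappa)\|r_t\|_2^2$. Combining the pieces,
\begin{equation*}
\rho_t^2 = a_t^2 - 2a_tb_t c_t(1-\kappa)^2 + b_t^2\|H_t\|^2 + O(\kappa + L\|r_t\|_2).
\end{equation*}

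\textbf{Main obstacle.} The hard step is showing that $\rho_t < 1$. The prefactor $a_t = 1/\sqrt{\alpha_t}$ exceeds $1$, so it expands the residue, and the restoring force has to beat that expansion. To first order in $\beta_t$, the requirement becomes
\begin{equation*}
\tfrac12\beta_t + \beta_t\big(-c_t + O(\kappa)\big) < 0, \qquad \text{i.e.}\qquad c_t > \tfrac12 + O(\kappa).
\end{equation*}
I would first verify this from the Gaussian model. The normal marginal has variance $\bar\alpha_t\sigma_0^2 + (1-\bar\alpha_t)$ with $\sigma_0$ small because $p_0$ concentrates on $\mathcal{M}$, which gives $c_t \approx 1/(\bar\alpha_t\sigma_0^2 + 1-\bar\alpha_t) > 1$ whenever $\sigma_0 < 1$. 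That is comfortably above $1/2$.

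Two remaining terms must then be controlled. The quadratic $b_t^2$ term is $O(\beta_t^2)$, and I would absorb it by assuming a fine enough step schedule. The $O(\kappa)$ tangential leakage must be small enough not to erode the margin $c_t - \tfrac12$, and I would state explicitly that $\kappa$ and $\Delta$ are small relative to that margin.
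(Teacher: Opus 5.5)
Your argument has the same skeleton as the paper's proof: one reverse step for both coupled trajectories, subtract, linearize the score difference as the restoring force of Lemma~\ref{lem:direction}, and read off $\rho_t$. It differs at the one point that matters. The paper uses the Euler step $x_{t-1}\approx x_t+\eta_t s_\theta(x_t,t)$, which drops the linear drift of the VP process. It therefore gets $r_{t-1}\approx(1-\eta_tc_t)r_t$, and $\rho_t=|1-\eta_tc_t|<1$ follows from the bare positivity $c_t>0$ of Lemma~\ref{lem:direction} together with the assumed $0<\eta_tc_t<2$. You keep the prefactor $1/\sqrt{\alpha_t}>1$ of the sampler actually used in \eqref{eq:diffusion_reverse}. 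You then correctly see that positivity of $c_t$ is no longer enough: the restoring force must beat an expansion, which requires a quantitative lower bound on $c_t$, and you take this from the Gaussian normal model. The paper's version is shorter and is what one gets for a drift-free (VE-type) flow. Yours is faithful to the VP/DDPM setting and exposes an assumption the paper's step hides.

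Two points in your execution need repair.

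\emph{Which sampler you are analyzing.} The update $x_{t-1}=\alpha_t^{-1/2}(x_t+\beta_t s_\theta)$ is the DDPM ancestral mean with shared noise. It is not a discretization of the probability flow ODE, nor a DDIM step. For those, the score coefficient is about $\beta_t/2$, and your first-order threshold becomes $c_t>1$ rather than $c_t>1/2$. The model value $c_t=1/(\bar\alpha_t\sigma_0^2+1-\bar\alpha_t)$ still satisfies this when $\sigma_0<1$, but with a thinner margin.

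\emph{The quadratic term.} The $b_t^2\|H_t\|^2$ term is not uniformly $O(\beta_t^2)$. In the same Gaussian model, $c_t$ can be as large as $1/(1-\bar\alpha_t)$, so $\beta_tc_t=O(1)$ in the last few steps; for example, $\beta_1c_1\to1$ as $\sigma_0\to0$. Refining the schedule does not change this. Rather than expanding in $\beta_t$, compute the normal block exactly: it multiplies $r_\perp$ by $(1-\beta_tc_t)/\sqrt{\alpha_t}$. This factor is $<1$ iff $\beta_tc_t>1-\sqrt{\alpha_t}$, which is the exact form of your $c_t>1/2$, and it is nonnegative when $\beta_tc_t\le1$. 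The Gaussian model gives both conditions:
\begin{itemize}
\item $\beta_tc_t\le\beta_t/(1-\bar\alpha_t)\le1$;
\item $c_t>1$, hence $\beta_tc_t>\beta_t>1-\sqrt{\alpha_t}$.
\end{itemize}
So the normal factor lies in $[0,1)$ with no step-size assumption. Only the tangential block then has to be controlled by your $\kappa$; there the restoring force is weak and the prefactor $1/\sqrt{\alpha_t}$ can win.
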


\noindent
\emph{Proof sketch.}
Consider one reverse step for both trajectories and subtract them to obtain an update rule for $r_t$.
The score difference contributes a drift opposite to the normal component of $r_t$ (Lemma~\ref{lem:direction}). Since the off-manifold component dominates, this restoring effect yields the contraction in Eq.~\eqref{eq:onestep}. A detailed derivation is provided in Appendix~\ref{app:proof_lemma_onestep}.

\begin{theorem}[Exponential decay of watermark residue]
\label{thm:decay}
By combining the forward noising at timestep $t^\ast$ with the one-step contraction in Lemma~\ref{lem:onestep}, the final residue after reverse reconstruction satisfies
\begin{equation}
    \|r_0\|_2
    \le    \underbrace{\sqrt{\bar{\alpha}_{t^\ast}}}_{\text{forward scaling}}
    \cdot \underbrace{\left(\prod_{t=1}^{t^\ast}\rho_t\right)}_{\text{reverse contraction}}
    \cdot
    \Delta .
\end{equation}
Moreover, for any detection threshold $\tau>0$, there exists a minimum diffusion steps $t^\ast_{\min}$ such that for all $t^\ast > t^\ast_{\min}$, the watermark becomes undetectable ($S(\widehat{x_0};k)<\tau$).
\end{theorem}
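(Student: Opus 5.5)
The plan is to couple the two trajectories through the forward phase with shared noise, then iterate Lemma~\ref{lem:onestep} down the reverse chain, and finally turn the norm bound into undetectability through a Lipschitz assumption on the detection statistic.

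\textbf{Step 1: residue at $t^\ast$.} Diffuse the clean state $x$ and the watermarked state $x_w=x+\delta$ to $t^\ast$ with the same Gaussian draw $\epsilon$ in \eqref{eq:diff_erase_forward}. The noise cancels in the difference, so
\begin{equation*}
r_{t^\ast}=x_{t^\ast}-x^{\mathrm{clean}}_{t^\ast}=\sqrt{\bar\alpha_{t^\ast}}\,\delta .
\end{equation*}
By $\Delta$-imperceptibility this gives $\|r_{t^\ast}\|_2\le\sqrt{\bar\alpha_{t^\ast}}\,\Delta$. This scaling is the forward factor in the bound.

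\textbf{Step 2: reverse contraction.} Run both trajectories under the deterministic probability-flow dynamics. If the stochastic sampler \eqref{eq:diffusion_reverse} is used instead, couple the two runs with identical $z_t$ so that the injected noise also cancels in $r_t$. At each step $t=t^\ast,\dots,1$, apply Lemma~\ref{lem:onestep} to get $\|r_{t-1}\|_2\le\rho_t\|r_t\|_2$. A telescoping induction then gives
\begin{equation*}
\|r_0\|_2\le\Big(\prod_{t=1}^{t^\ast}\rho_t\Big)\|r_{t^\ast}\|_2 ,
\end{equation*}
and combining with Step~1 yields the claimed bound.

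The delicate point in this step is that Lemma~\ref{lem:onestep} requires the residue to stay dominated by its off-manifold component at every $t$. I would argue this is preserved inductively. Tangential components are not amplified by the reverse drift, since the score is approximately flat along $\mathcal{T}\mathcal{M}$ under the local Gaussian model. Normal components are strictly damped by Lemma~\ref{lem:direction}. So the initial dominance $\|\delta_\perp\|\gg\|\delta_\parallel\|$ persists, possibly up to a constant factor. Making this precise is the main obstacle. If I cannot close it, I would state it as a standing assumption along the trajectory, consistent with how Lemma~\ref{lem:onestep} is phrased.

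\textbf{Step 3: undetectability.} The clean reconstruction $\widehat{x}^{\mathrm{clean}}_0$ derived from unwatermarked audio should satisfy $S(\widehat{x}^{\mathrm{clean}}_0;k)\le\tau-\gamma_0$ for some margin $\gamma_0>0$. I would also assume $S(\cdot;k)$ is $L$-Lipschitz; both are natural regularity assumptions to add. Then
\begin{equation*}
S(\widehat{x}_0;k)\le\tau-\gamma_0+L\|r_0\|_2 .
\end{equation*}
Since $\bar\alpha_t$ is strictly decreasing to $0$ and each $\rho_t<1$, the bound $B(t^\ast)=\sqrt{\bar\alpha_{t^\ast}}\prod_{t\le t^\ast}\rho_t\,\Delta$ is strictly decreasing in $t^\ast$ and tends to $0$. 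Even without the $\rho_t$ factors, $\sqrt{\bar\alpha_{t^\ast}}\to0$ suffices. Define
\begin{equation*}
t^\ast_{\min}=\min\{t:\,L\,B(t)<\gamma_0\}.
\end{equation*}
Monotonicity of $B$ then gives $S(\widehat{x}_0;k)<\tau$ for all $t^\ast>t^\ast_{\min}$.
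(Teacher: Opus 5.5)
Your proposal follows the paper's proof essentially step for step: shared-noise coupling in the forward phase gives $r_{t^\ast}=\sqrt{\bar{\alpha}_{t^\ast}}\,\delta$, iterating Lemma~\ref{lem:onestep} gives the product bound, and an $L$-Lipschitz detection statistic plus monotonicity in $t^\ast$ gives $t^\ast_{\min}$. The differences are refinements only: your margin assumption $S(\widehat{x}^{\mathrm{clean}}_0;k)\le\tau-\gamma_0$ replaces the paper's informal step $S(x^{\mathrm{clean}};k)\approx 0$ inside an inequality chain (the paper also bounds $\prod_t\rho_t\le\bar{\rho}^{\,t^\ast}$ and takes logarithms rather than using monotonicity of your $B$ directly), and you make explicit the $z_t$ coupling and the persistence of off-manifold dominance, which the paper absorbs into the hypothesis of Lemma~\ref{lem:onestep}.
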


Theorem~\ref{thm:decay} shows that DiffErase suppresses watermark residue via two complementary mechanisms: (i) forward noising scales the signal component by ($\sqrt{\bar{\alpha}_{t^\ast}}$), attenuating the initial watermark perturbation, and (ii) reverse reconstruction yields geometric contraction with factors ($\rho_t$), filtering out off-manifold components. The full proof and derivation of $t^\ast_{\min}$ are provided in Appendix~\ref{app:proof_thm_decay}.

\section{Evaluation}
\paragraph{Setup.}
We implement DiffErase with two diffusion backbones: (i) mel-spectrogram diffusion~\citep{von-platen-etal-2022-diffusers} with BigVGAN~\citep{lee2023bigvgan} as the vocoder, and (ii) latent diffusion on mel-spectrograms~\citep{liu2023audioldm} with HiFi-GAN~\citep{kong2020hifi} as the vocoder. Note that the diffusion process operates on integer timesteps $t\in\{1,\dots,N\}$ ($N=1000$ in our implementation), we report the noise level as a normalized ratio $t^\ast = t/N \in (0,1]$ for clarity. Training configurations and implementation details are provided in Appendix~\ref{apx:differase_detail}. 

\begin{table*}[!t]
\centering
\caption{\textbf{Comparison with baselines} on the speech domain. Left: audio quality metrics (higher is better). Right: watermark detection measured by $\mathrm{TPR}@1\%\mathrm{FPR}$ (lower is better); \ding{55} indicates $\mathrm{TPR} < 0.01$.}
\label{tab:main_results_speech_compact}
\setlength{\tabcolsep}{4pt}
\renewcommand{\arraystretch}{0.95}
\resizebox{0.95\textwidth}{!}{
\begin{tabular}{l l | c c c | c c c c c}
\toprule
\multirow{2}{*}{\textbf{Type}} & \multirow{2}{*}{\textbf{Attack}}
& \multicolumn{3}{c|}{\textbf{Audio Quality}}
& \multicolumn{5}{c}{\textbf{Watermark Detection}($\mathrm{TPR}@1\%\mathrm{FPR}\downarrow$)}\\
\cmidrule(lr){3-5}\cmidrule(lr){6-10}
& & SQUIM-MOS$\uparrow$ & ViSQOL$\uparrow$ & MUSHRA$\uparrow$
& AudioSeal & WavMark & TimbreWM & Perth & SilentCipher \\
\midrule

\multirow{5}{*}{\textbf{Signal-level}}
& Pitch shift        & 4.054 & 1.165 & 61.66 & \ding{55} & \ding{55} & \ding{55} & \ding{55} & \ding{55} \\
& Time stretch       & 4.072 & 1.502 & 66.25 & 1.00 & 0.95 & 1.00 & 1.00 & \ding{55} \\
& Low-pass filter    & 3.807 & 3.214 & 91.73 & 1.00 & 1.00 & 1.00 & 1.00 & 0.50 \\
& High-pass filter   & 2.757 & 1.579 & 73.20 & 1.00 & 1.00 & 1.00 & 1.00 & 0.53 \\
& Additive noise     & 3.062 & 1.063 & 25.64 & \ding{55} & \ding{55} & \ding{55} & \ding{55} & \ding{55} \\
\cmidrule(lr){1-10}

\multirow{2}{*}{\textbf{Codec}}
& MP3                & 4.503 & 4.123 & 96.42 & 1.00 & 0.97 & 1.00 & 1.00 & 0.34 \\
& EnCodec            & 4.369 & 3.708 & 96.97 & 1.00 & \ding{55} & \ding{55} & 0.50 & \ding{55} \\
\cmidrule(lr){1-10}

\multirow{3}{*}{\textbf{Adaptive}}
& Square Attack      & 3.025 & 2.567 & 54.07 & \ding{55} & \ding{55} & 0.28 & \ding{55} & \ding{55} \\
& \cellcolor{gray!15}\textbf{\textsc{DiffErase-latent}}
  & \cellcolor{gray!15}4.214 & \cellcolor{gray!15}3.477 & \cellcolor{gray!15}87.73
  & \cellcolor{gray!15}\ding{55} & \cellcolor{gray!15}\ding{55} & \cellcolor{gray!15}\ding{55}
  & \cellcolor{gray!15}\ding{55} & \cellcolor{gray!15}\ding{55} \\
& \cellcolor{gray!15}\textbf{\textsc{DiffErase-mel}}
  & \cellcolor{gray!15}4.423 & \cellcolor{gray!15}3.961 & \cellcolor{gray!15}93.81
  & \cellcolor{gray!15}\ding{55} & \cellcolor{gray!15}\ding{55} & \cellcolor{gray!15}\ding{55}
  & \cellcolor{gray!15}\ding{55} & \cellcolor{gray!15}\ding{55} \\
\bottomrule
\end{tabular}
}
\end{table*}

\paragraph{Datasets.}
We evaluate DiffErase across three audio domains: speech, music, and environmental sounds. For speech, we use the \textit{100-hour} subset of LibriSpeech~\citep{panayotov2015librispeech}. For music, we use a subset of \textit{FMA-small} from the Free Music Archive (FMA)~\citep{defferrard2016fma}. For environmental sounds, we use Clotho~\citep{drossos2020clotho}. We randomly sample 100 clips from each domain for evaluation (more details refer to Appendix~\ref{apx:dataset_detail}).

\paragraph{Target watermarking systems.}
We evaluate DiffErase against five state-of-the-art neural audio watermarking systems: AudioSeal~\citep{roman2024proactive}, TimbreWM~\citep{liu2023detecting}, WavMark~\citep{chen2023wavmark}, Perth~\citep{resembleai_perth_2025}, and SilentCipher~\citep{singh2024silentcipher}. All systems are configured according to their official releases. 

\paragraph{Attack baselines.}
We compare DiffErase against three categories of removal attacks. (i) \emph{signal-level attacks}, like pitch shifting, time stretching, filtering, and additive Gaussian noise; (ii) \emph{codec-based attacks}, including traditional codecs (e.g., MP3) and neural codecs (e.g., EnCodec); (iii) \emph{adaptive attacks}, specifically Square Attack~\citep{andriushchenko2020square}, a query-based adversarial attack. Details of all attack baselines are provided in Appendix~\ref{apx:attack_baseline}.

\paragraph{Evaluation metrics.}
We follow the evaluation protocol of~\citet{o2025deep}. For watermark removal, we report the true positive rate at a fixed false positive rate, denoted as $\mathrm{TPR}@1\%\mathrm{FPR}$; lower values indicate stronger removal. To assess perceptual quality after attack, we use both objective and subjective metrics: (1) SQUIM-MOS~\citep{kumar2023torchaudio}, a non-intrusive metric estimating mean opinion score (MOS) on a 1--5 scale without reference audio; (2) ViSQOL~\citep{chinen2020visqol}, which measures spectro-temporal similarity between reference and test audio on a 1--5 scale; and (3) MUSHRA, a subjective listening test where 16 participants rate samples on a 0--100 scale (details in Appendix~\ref{apx:subjective_test}). Since attackers have access only to the watermarked audio $x_w$, we use $x_w$ as the reference for computing perceptual metrics.

\subsection{Results and analysis}\label{exp:results_and_anlaysis}
\newcommand{\NADE}[2]{#1 / #2}

\begin{table}[t]
\centering
\caption{\textbf{\textsc{DiffErase-mel} performance across domains.} Each entry shows values before/after attack. MUSHRA (higher is better) measures subjective audio quality; $\mathrm{TPR}@1\%\mathrm{FPR}$ (lower is better) measures watermark detectability.}
\label{tab:differase_by_domain_metric_block}

\resizebox{0.98\columnwidth}{!}{%
\begin{tabular}{c c c c}
\toprule
\textbf{Domain} & \textbf{System} & MUSHRA$\uparrow$ & $\mathrm{TPR}@1\%\mathrm{FPR}\downarrow$ \\
\midrule

\multirow{5}{*}{Speech}
& AudioSeal      & \NADE{95.31}{93.19} & \NADE{1.00}{0.00} \\
& WavMark        & \NADE{98.38}{96.12} & \NADE{1.00}{0.00} \\
& TimbreWM         & \NADE{95.62}{95.06} & \NADE{1.00}{0.00} \\
& Perth          & \NADE{92.31}{90.69} & \NADE{1.00}{0.00} \\
& SilentCipher   & \NADE{96.69}{94.00} & \NADE{1.00}{0.00} \\

\midrule

\multirow{5}{*}{Music}
& AudioSeal      & \NADE{95.62}{87.12} & \NADE{1.00}{0.00} \\
& WavMark        & \NADE{92.75}{85.31} & \NADE{1.00}{0.00} \\
& TimbreWM         & \NADE{95.00}{84.06} & \NADE{1.00}{0.01} \\
& Perth          & \NADE{92.62}{84.94} & \NADE{1.00}{0.46} \\
& SilentCipher   & \NADE{93.31}{90.12} & \NADE{1.00}{0.00} \\

\midrule

\multirow{5}{*}{Env.}
& AudioSeal      & \NADE{92.00}{83.62} & \NADE{1.00}{0.00} \\
& WavMark        & \NADE{94.06}{87.94} & \NADE{1.00}{0.00} \\
& TimbreWM         & \NADE{93.62}{86.38} & \NADE{0.97}{0.00} \\
& Perth          & \NADE{90.69}{85.25} & \NADE{1.00}{0.19} \\
& SilentCipher   & \NADE{94.88}{89.19} & \NADE{1.00}{0.00} \\

\bottomrule
\end{tabular}%
}
\end{table}

\paragraph{Watermark removal and quality preservation.}
Table~\ref{tab:main_results_speech_compact} presents results on speech domain. Signal-level transformations generally fail to remove watermarks without causing severe quality degradation. Pitch shifting disables all watermark detectors but severely alters content and timbre, resulting in a low ViSQOL of 1.165 and MUSHRA of 61.66. Additive noise also removes all watermarks but introduces noticeable noise (MUSHRA:25.64). Other signal-level attacks, such as time stretching and frequency filtering, are largely ineffective, as these watermarking systems are typically trained to withstand such distortions.

Codec-based attacks demonstrate stronger performance. Most watermarking methods remain robust to MP3 compression, maintaining $\mathrm{TPR}@1\%\mathrm{FPR} \approx 1.00$. EnCodec achieves partial success against WavMark, TimbreWM, and SilentCipher, but fails against AudioSeal and Perth. Notably, both codecs preserve high perceptual quality (ViSQOL $>3.7$ and MUSHRA $>96$).

\begin{figure*}[!t]
    \centering
    \includegraphics[width=0.31\textwidth]{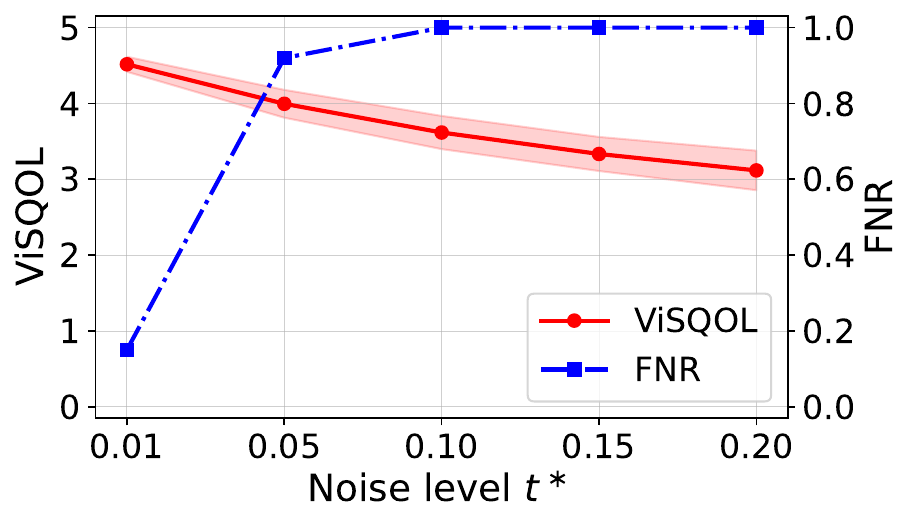}\hfill
    \includegraphics[width=0.31\textwidth]{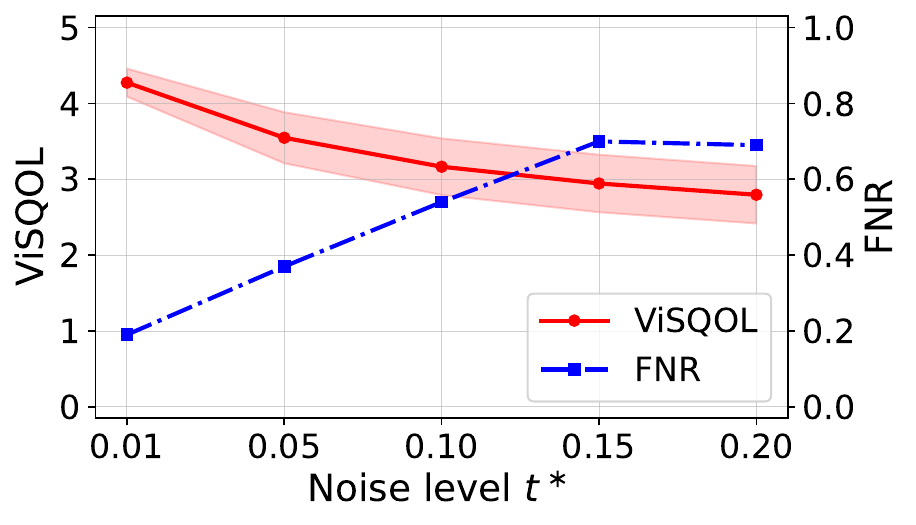}\hfill
    \includegraphics[width=0.31\textwidth]{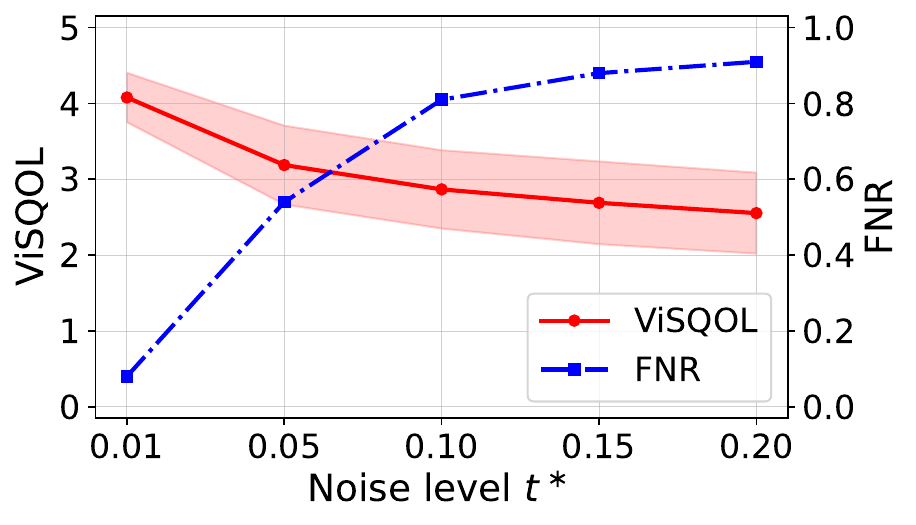}
    \caption{\textbf{Effect of noise level $t^*$ for \textsc{DiffErase-mel}.} Trade-off between audio quality (ViSQOL, left axis) and watermark removal (FNR $= 1 - \mathrm{TPR}$, right axis), evaluated on Perth. \textbf{Left:} Speech. \textbf{Middle:} Music. \textbf{Right:} Environment.}
    \label{fig:noise_level_tradeoff}
\end{figure*}

Adversarial attacks such as Square Attack degrade watermark detection by optimizing against detector outputs. However, they also introduce noticeable artifacts, yielding a MUSHRA score of only 54.07. In contrast, both DiffErase variants suppress watermark detection to $\mathrm{TPR}=0$ across all tested watermarking schemes while preserving perceptual quality. At noise level $t^\ast=0.1$, \textsc{DiffErase-latent} achieves SQUIM-MOS of 4.214 and ViSQOL of 3.477, and \textsc{DiffErase-mel} achieves SQUIM-MOS of 4.423 and ViSQOL of 3.961. Subjective MUSHRA scores further confirm that DiffErase maintains audio quality better than competing attacks. More comparison results on music and environmental sound are provided in Appendix~\ref{apx:comparison_on_music_env}.

Table~\ref{tab:differase_by_domain_metric_block} reports the overall performance of \textsc{DiffErase-mel} across all three domains. DiffErase consistently reduces $\mathrm{TPR}@1\%\mathrm{FPR}$ from 1.00 to near 0.00 for most watermarking systems, with limited perceptual degradation (MUSHRA drops of 2--10 points). The commercial watermark Perth exhibits stronger robustness: while DiffErase fully removes Perth watermarks on speech ($\mathrm{TPR}=0.00$), residual detection remains on music ($\mathrm{TPR}=0.46$) and environmental sounds ($\mathrm{TPR}=0.19$). We analyze watermark strength by measuring the $\ell_2$ distance between clean and watermarked audio. As shown in Figure~\ref{fig:l2_distance}, Perth induces substantially larger perturbations than all other watermarking schemes, approximately 4--10$\times$ higher, which explains its robustness to DiffErase and its impact on perceptual quality. 

\paragraph{Effect of noise level between watermark removal and audio quality.}
We evaluate how the noise level $t^\ast$ controls the trade-off between watermark removal and perceptual quality. As illustrated in Section~\ref{sec:theorectical_analysis} (Theorem~\ref{thm:decay}), a threshold $t^\ast_{\min}$ exists above which imperceptible watermarks can be removed. Larger $t^\ast$ provides stronger denoising suppression of the embedded signal, improving removal but degrading fidelity. Smaller $t^\ast$ preserves perceptual quality but cannot effectively reduce watermark. We use Perth as the target since it is the most robust watermark in our evaluation and induces the largest perturbation, as shown in Figure~\ref{fig:l2_distance}. Results are reported for \textsc{DiffErase-mel}; results of \textsc{DiffErase-latent} are provided in Appendix~\ref{apx:noise_level_tradeoff_latent}.

\begin{figure}[t]
  \centering
  \includegraphics[width=0.9\columnwidth]{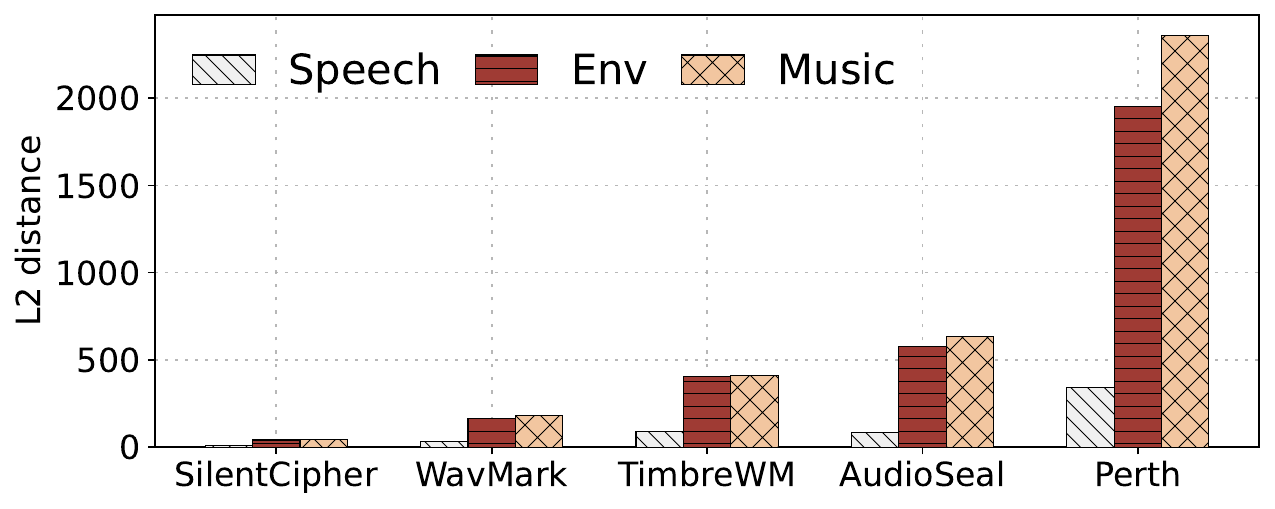}
  \caption{\textbf{The $\ell_2$ distance between clean and watermarked audio} across five watermarking methods on three domains. Perth embeds substantially stronger perturbations.}
  \label{fig:l2_distance}
\end{figure}
Figure~\ref{fig:noise_level_tradeoff} plots ViSQOL (left axis) and FNR $= 1-\mathrm{TPR}$ (right axis) as $t^\ast$ increases from 0.01 to 0.2. Across all domains, increasing $t^\ast$ improves watermark removal (higher FNR) while gradually reducing audio quality (lower ViSQOL). On speech, Perth becomes undetectable at $t^\ast \ge 0.10$ (FNR $\approx 1$) while quality remains high (ViSQOL $> 3.5$). On music and environmental sounds, FNR increases substantially with $t^\ast$ but cannot be fully removed even at $t^\ast = 0.20$. This is consistent with our earlier observation that Perth embeds a stronger watermark signal, which degrades perceptual quality but enhances robustness. It  requires a higher diffusion noise level to completely eliminate. 

\begin{figure*}[t]
\centering

\begin{subfigure}[t]{0.19\textwidth}
    \includegraphics[width=0.9\linewidth]{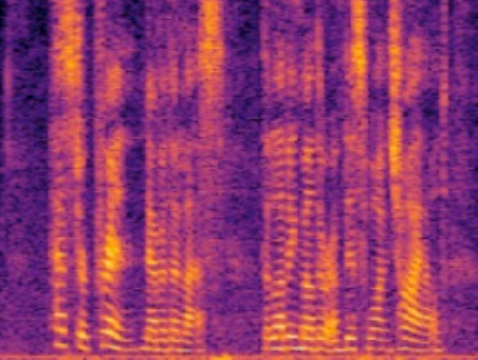}
    \caption*{Original}
\end{subfigure}
\hfill
\begin{subfigure}[t]{0.19\textwidth}
    \includegraphics[width=0.9\linewidth]{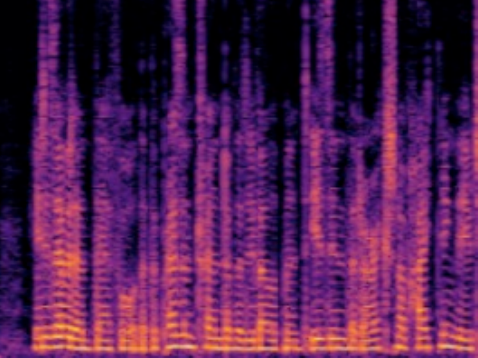}
    \caption*{Original}
\end{subfigure}
\hfill
\begin{subfigure}[t]{0.19\textwidth}
    \includegraphics[width=0.9\linewidth]{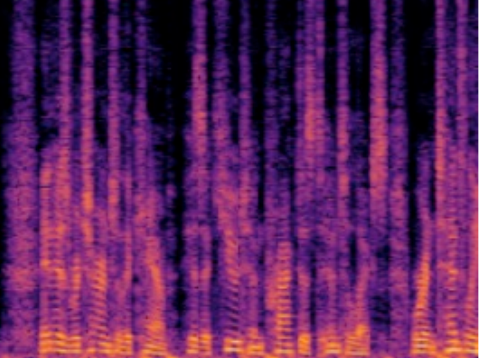}
    \caption*{Original}
\end{subfigure}
\hfill
\begin{subfigure}[t]{0.19\textwidth}
    \includegraphics[width=0.9\linewidth]{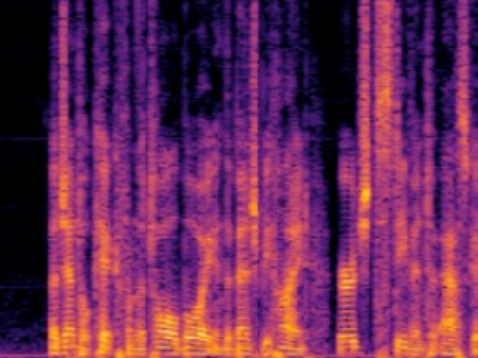}
    \caption*{Original}
\end{subfigure}
\hfill
\begin{subfigure}[t]{0.19\textwidth}
    \includegraphics[width=0.9\linewidth]{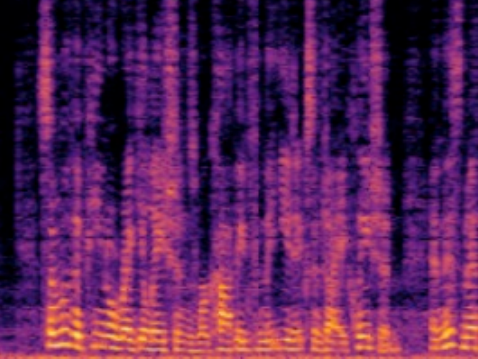}
    \caption*{Original}
\end{subfigure}

\vspace{2mm}

\begin{subfigure}[t]{0.19\textwidth}
    \includegraphics[width=0.9\linewidth]{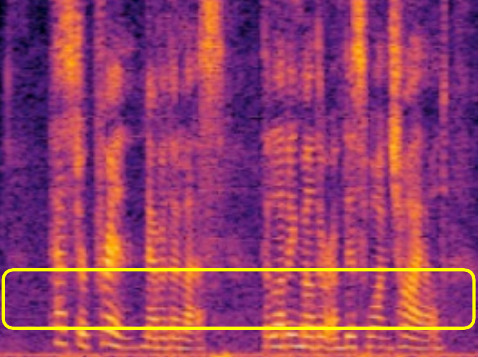}
    \caption*{AudioSeal}
\end{subfigure}
\hfill
\begin{subfigure}[t]{0.19\textwidth}
    \includegraphics[width=0.9\linewidth]{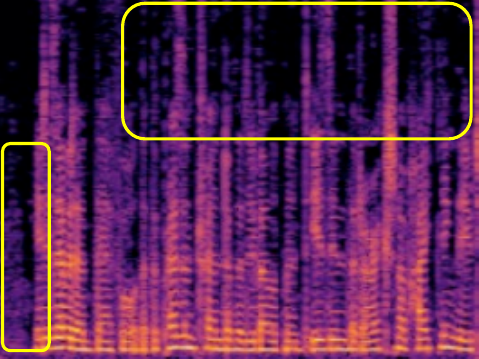}
    \caption*{WavMark}
\end{subfigure}
\hfill
\begin{subfigure}[t]{0.19\textwidth}
    \includegraphics[width=0.9\linewidth]{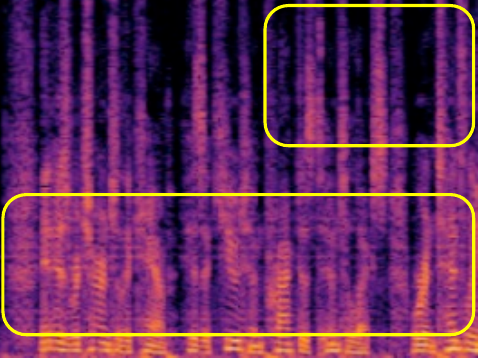}
    \caption*{TimbreWM}
\end{subfigure}
\hfill
\begin{subfigure}[t]{0.19\textwidth}
    \includegraphics[width=0.9\linewidth]{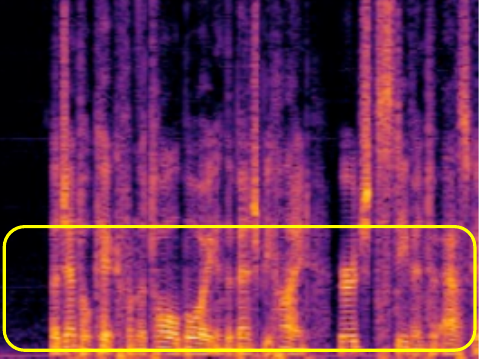}
    \caption*{Perth}
\end{subfigure}
\hfill
\begin{subfigure}[t]{0.19\textwidth}
    \includegraphics[width=0.9\linewidth]{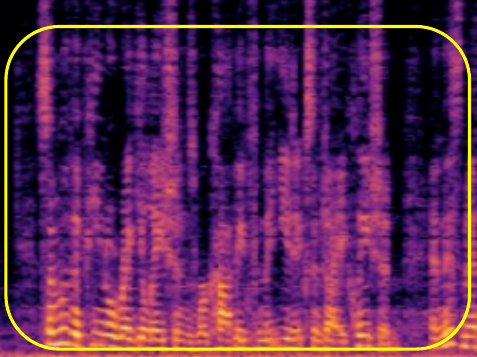}
    \caption*{SilentCipher}
\end{subfigure}

\vspace{2mm}

\begin{subfigure}[t]{0.19\textwidth}
    \includegraphics[width=0.9\linewidth]{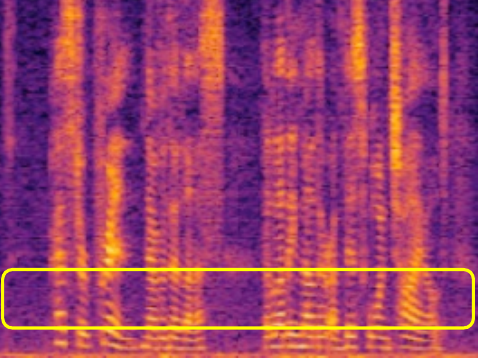}
    \caption*{DiffErase}
\end{subfigure}
\hfill
\begin{subfigure}[t]{0.19\textwidth}
    \includegraphics[width=0.9\linewidth]{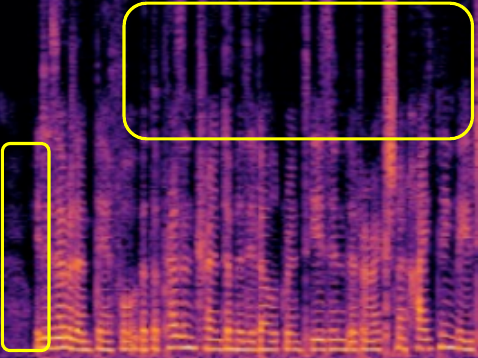}
    \caption*{DiffErase}
\end{subfigure}
\hfill
\begin{subfigure}[t]{0.19\textwidth}
    \includegraphics[width=0.9\linewidth]{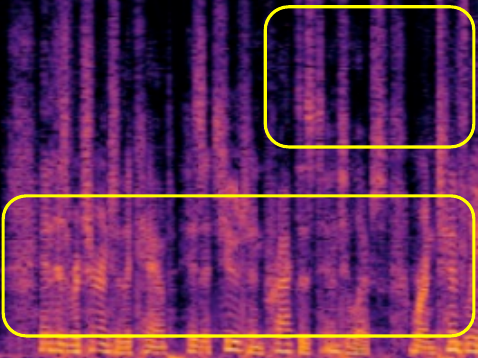}
    \caption*{DiffErase}
\end{subfigure}
\hfill
\begin{subfigure}[t]{0.19\textwidth}
    \includegraphics[width=0.9\linewidth]{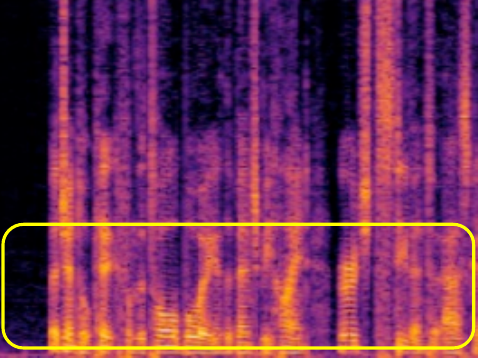}
    \caption*{DiffErase}
\end{subfigure}
\hfill
\begin{subfigure}[t]{0.19\textwidth}
    \includegraphics[width=0.9\linewidth]{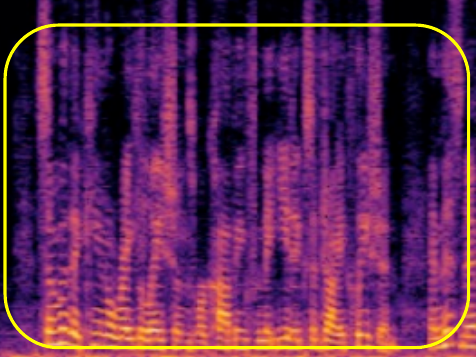}
    \caption*{DiffErase}
\end{subfigure}

\caption{\textbf{Spectrogram visualization.} Top: original audio. Middle: watermarked audio. Bottom: after \textsc{DiffErase-mel} ($t^* = 0.1$). Watermark patterns (middle row) are attenuated while acoustic content is preserved.}

\label{fig:spectrogram}
\end{figure*}

\paragraph{Spectrogram visualization.}
Figure~\ref{fig:spectrogram} visualizes spectrograms for five audio examples processed by \textsc{DiffErase-mel} at $t^\ast=0.1$. The top row shows original (clean) audio, the middle row shows watermarked samples, and the bottom row shows DiffErase outputs. Different watermarking schemes embed watermarks into different time--frequency regions (highlighted by yellow boxes). After DiffErase, these structured patterns are visibly attenuated or removed.

The watermark patterns vary across schemes. AudioSeal, TimbreWM, and Perth embed repeated patterns to ensure robustness, introducing visible band-like structures. These structures become much less pronounced after DiffErase. WavMark embeds small distortions in low-energy regions, making the watermark less perceptible but still detectable. DiffErase disrupts these localized patterns as well. SilentCipher shows only subtle and spatially spread perturbations with less obvious spectrogram patterns, which explains its lower robustness in our evalution. Across all samples, DiffErase eliminates watermark patterns while preserving the main acoustic structure, though some fine-grained details become slightly smoothed. This smoothing is controlled by $t^\ast$: smaller $t^\ast$ values better preserve details but may leave residual watermark evidence, while larger values improve removal at the cost of additional smoothing.

\begin{table}[h]
\centering
\caption{\textbf{Ablation study} on speech. Full results on music and environment are provided in Appendix~\ref{apx:more_ablation}.}

\label{tab:differase_ablation_speech}
\footnotesize
\setlength{\tabcolsep}{4pt}
\renewcommand{\arraystretch}{0.95}
\resizebox{\columnwidth}{!}{
\begin{tabular}{l c c c}
\toprule
\textbf{Method} & \textbf{TimbreWM} & \textbf{Perth} & \textbf{WavMark} \\
\midrule
\multicolumn{4}{l}{\textbf{(i) mel-to-waveform}} \\
\addlinespace[1pt]
Griffin-Lim algorithm (GLA) only                       & 1.00 & 1.00 & 1.00 \\
\textsc{DiffErase-mel} + GLA & 0.00 & 0.18 & 0.00  \\
\textsc{DiffErase-latent} + GLA & 0.00 & 0.00 & 0.00 \\
\cmidrule(lr){1-4}
\multicolumn{4}{l}{\textbf{(ii) diffusion sampler}} \\
\addlinespace[1pt]
\textsc{DiffErase-mel} (DDPM)  & 0.00 & 0.00 & 0.00 \\
\textsc{DiffErase-mel} (DDIM)  & 0.00 & 0.04 & 0.00 \\
\textsc{DiffErase-latent} (DDPM)  & 0.00 & 0.00 & 0.00 \\
\textsc{DiffErase-latent} (DDIM)  & 0.00 & 0.02 & 0.00 \\
\bottomrule
\end{tabular}
}
\end{table}

\paragraph{Ablation study.}
We conduct ablations to verify that watermark removal is primarily driven by the diffusion process rather than other pipeline components. We evaluate on three robust watermarking systems (TimbreWM, Perth, and WavMark), excluding AudioSeal and SilentCipher because their detection fails under waveform$\rightarrow$spectrogram$\rightarrow$waveform conversion alone.

To isolate the contribution of DiffErase, we reconstruct waveforms using the Griffin--Lim algorithm (GLA)~\citep{griffin1984signal}, which introduces minimal spectral distortion. As shown in Table~\ref{tab:differase_ablation_speech}, Griffin--Lim reconstruction alone does not affect watermark detection ($\mathrm{TPR}@1\%\mathrm{FPR}=1.00$ for all systems). In contrast, adding diffusion perturbations and denoising at $t^\ast=0.1$ substantially degrades detection: \textsc{DiffErase-mel} reduces TimbreWM and WavMark to $\mathrm{TPR}=0.00$ and Perth to 0.18, while \textsc{DiffErase-latent} removes all watermarks completely. The stronger removal of \textsc{DiffErase-latent} is consistent with the additional information bottleneck introduced from the VAE encoder. 

We also compare diffusion samplers. At the same noise level $t^\ast=0.1$, DDPM shows more effective removal than DDIM (50-step schedule), even though DDIM provides a faster processing speed. This suggests that the fine-grained denoising trajectory better suppresses watermark residues.

\section{Conclusion}
We propose DiffErase, a black-box attack that removes audio watermarks by leveraging diffusion models as generative priors. Unlike existing attacks, DiffErase requires neither detector queries nor knowledge of the watermarking schemes. Our theoretical analysis shows that diffusion dynamics suppress watermarks by contracting off-manifold perturbations along the reverse trajectory. Extensive experiments across three audio domains and five state-of-the-art watermarking systems demonstrate that DiffErase consistently removes watermarks while preserving perceptual quality. Our findings reveal a fundamental vulnerability in current audio watermarking designs: imperceptibility, while essential for practical deployment, inherently limits robustness against diffusion-based regeneration. This highlights the need for future watermarking designs to explicitly account for diffusion-based threats.

\clearpage

\section*{Impact Statement}
This paper studies the vulnerability of neural audio watermarking by presenting a black-box removal attack based on diffusion regeneration. The positive impact is to support more realistic robustness evaluation and facilitate stronger watermark designs. However, the proposed attack could potentially be misused to destroy provenance tracking or copyright protection. We are committed to responsible disclosure and encourage deployment-side mitigations and benchmarking protocols that account for diffusion-based threats.

\bibliography{icml26}
\bibliographystyle{icml2026}

\newpage
\clearpage
\appendix
\onecolumn
\section{Proofs in Section~\ref{sec:theorectical_analysis}}
\subsection{Proof of Lemma~\ref{lem:direction}}\label{app:proof_lemma_direction}
\begin{lemma}[Score restores off-manifold deviations]
Assume the manifold hypothesis and a local Gaussian approximation of $p_t$, the score function points towards the high-density region. For a watermarked state $x_t$ with off-manifold component $\Pi_{\perp}(x_t)\delta$, there exists $c_t>0$ such that
\begin{equation}
\langle s_\theta(x_t,t), \Pi_{\perp}(x_t)\delta\rangle \le -c_t \|\Pi_{\perp}(x_t)\delta\|_2^2,
\label{eq:off_manifold_apx}
\end{equation}
\end{lemma}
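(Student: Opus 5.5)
The plan is to make the ``local Gaussian approximation'' explicit, compute the normal part of the true score $\nabla_x\log p_t$ in closed form, and then move from the true score to $s_\theta$ with an error bound. Fix $t$ and let $m_t(x_t)$ be the nearest point of the scaled manifold $\sqrt{\bar\alpha_t}\,\mathcal{M}$ to $x_t$. Let $\Pi_{\perp}(x_t)$ be the orthogonal projector onto the normal space there, and $\Pi_{\parallel}=I-\Pi_{\perp}$.

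\textbf{Step 1: local model of $p_t$.} Near $m_t$, I will model $p_t$ as a product of two factors. The first is a smooth density along the tangent directions. The second is a Gaussian in the normal directions, centered on the manifold:
\[
p_t(x)\;\propto\; g_\parallel\big(\Pi_{\parallel}(x-m_t)\big)\exp\!\Big(-\tfrac{1}{2\sigma_t^2}\|\Pi_{\perp}(x-m_t)\|_2^2\Big),
\qquad \sigma_t^2=(1-\bar\alpha_t)+\bar\alpha_t\sigma_{\mathcal{M}}^2 ,
\]
where $\sigma_{\mathcal{M}}\ge 0$ is the intrinsic thickness of the data concentration. This is the sense in which $p_t$ concentrates near $\mathcal{M}$. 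Differentiating and neglecting curvature of the projector (the local-linearization assumption) gives the true score as
\[
\nabla_x\log p_t(x_t)=\Pi_{\parallel}(\cdot)\;-\;\sigma_t^{-2}\,\Pi_{\perp}(x_t-m_t),
\]
where the first term is a tangential vector. Since that tangential term is orthogonal to $\Pi_{\perp}\delta$, only the linear restoring force $-\sigma_t^{-2}\Pi_{\perp}(x_t-m_t)$ survives in the inner product.

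\textbf{Step 2: identify the normal displacement of the watermarked state with $\Pi_{\perp}\delta$.} The lemma speaks of ``a watermarked state $x_t$ with off-manifold component $\Pi_{\perp}(x_t)\delta$''. I will read this literally: the part of the normal displacement $\Pi_{\perp}(x_t-m_t)$ that is not already accounted for by the Gaussian spread is the forward-scaled watermark $\kappa_t\Pi_{\perp}\delta$, with $\kappa_t=\sqrt{\bar\alpha_t}$ from \eqref{eq:reparam}. I will also use the component along $\Pi_{\perp}\delta$ to absorb into $\sigma_t^2$ the fresh noise direction that is independent of $\delta$. Concretely, I assume $\langle\Pi_{\perp}(x_t-m_t),\Pi_{\perp}\delta\rangle\ge \kappa_t\|\Pi_{\perp}\delta\|_2^2$. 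Plugging into Step 1 yields
\[
\langle\nabla\log p_t(x_t),\Pi_{\perp}\delta\rangle\le-\frac{\sqrt{\bar\alpha_t}}{\sigma_t^2}\|\Pi_{\perp}\delta\|_2^2 .
\]

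\textbf{Step 3: pass to the learned score.} I will assume a local score-accuracy condition on the normal part: $\|\Pi_{\perp}(s_\theta(x_t,t)-\nabla\log p_t(x_t))\|_2\le\eta_t\|\Pi_{\perp}\delta\|_2$. This is a relative error assumption in the regime where $\delta$ is small (by $\Delta$-imperceptibility). By Cauchy--Schwarz, the inequality of the lemma then holds with $c_t=\sqrt{\bar\alpha_t}/\sigma_t^2-\eta_t$. This $c_t$ is positive whenever $\eta_t<\sqrt{\bar\alpha_t}/\sigma_t^2$.

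\textbf{Main obstacle.} The main obstacle is Step 2. The actual normal displacement of $x_t$ contains Gaussian noise whose projection on $\Pi_{\perp}\delta$ can have either sign, so the pointwise inequality is not automatic. My first attempt is to justify the alignment assumption by letting $\sigma_t^2$ in the local model already describe the noise-smoothed density, so that the noise is part of the equilibrium spread and $\delta$ is the only systematic offset. If that is not convincing, the fallback is to state the bound for $\Pi_{\perp}\delta$ in the regime $\|\Pi_{\perp}\delta\|\gg$ the noise projection onto $\Pi_{\perp}\delta$. This noise projection is a one-dimensional Gaussian of size $\sqrt{1-\bar\alpha_t}$, so that noise enters only as a lower-order correction reducing $c_t$.
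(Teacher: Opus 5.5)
Your proposal follows the paper's argument: a local Gaussian model in the normal directions gives the restoring force $\Pi_{\perp}(x_t)s_\theta(x_t,t)\approx-\sigma_t^{-2}\bigl(x_t-\Pi_{\mathcal{M}}(x_t)\bigr)$, and identifying the normal displacement with the watermark's off-manifold component yields the negative quadratic form; the paper takes $c_t=1/\sigma_t^2$ and turns ``$\approx$'' into ``$\le$'' by fiat, whereas you track the $\sqrt{\bar\alpha_t}$ forward scaling and an explicit score-error term to get the more honest constant $c_t=\sqrt{\bar\alpha_t}/\sigma_t^2-\eta_t$. Your Step~2 alignment hypothesis is exactly the paper's unjustified step $x_t-\Pi_{\mathcal{M}}(x_t)\approx\Pi_{\perp}(x_t)\delta$, and you are right that folding the noise into $\sigma_t^2$ does not remove the pointwise term $\sqrt{1-\bar\alpha_t}\,\langle\Pi_{\perp}\epsilon,\Pi_{\perp}\delta\rangle$; a cleaner fix is to state the bound for the score \emph{difference} between coupled watermarked and clean states that share the same $\epsilon$, where this noise cancels, and this is how the one-step contraction lemma actually uses the result.
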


\begin{proof}
We rely on the geometric interpretation of the score function established in prior works~\citep{yoon2021adversarial, chung2022improving}. Under the manifold hypothesis, the diffusion marginal $p_t$ concentrates around $\mathcal{M}$ and can be locally approximated as a Gaussian centered at the manifold. As a result, for $x_t$ close to $\mathcal{M}$, the score has a normal component that points back toward the manifold. Concretely, we use the approximation
\begin{equation}
    \Pi_{\perp}(x_t) s_\theta(x_t, t) \approx -\frac{1}{\sigma_t^2} (x_t - \Pi_{\mathcal{M}}(x_t)),
\end{equation}
where $\Pi_{\mathcal{M}}(x_t)$ is the (local) projection of $x_t$ onto $\mathcal{M}$ and $\sigma_t$ is the noise scale at time $t$.

The watermark perturbation $\delta$ is modeled as predominantly off-manifold (as justified in Sec.~\ref{sec:theorectical_analysis}), the deviation from the manifold is dominated by the watermark component: $x_t - \Pi_{\mathcal{M}}(x_t) \approx \Pi_{\perp}(x_t)\delta$. Substituting this into the inner product, we obtain:
\begin{align}
    \langle s_\theta(x_t, t), \Pi_{\perp}(x_t)\delta\rangle 
    &= \langle \Pi_{\perp}(x_t) s_\theta(x_t, t), \Pi_{\perp}(x_t)\delta\rangle \\
    &\approx \left\langle -\frac{1}{\sigma_t^2}\Pi_{\perp}(x_t)\delta, \Pi_{\perp}(x_t)\delta \right\rangle \\
    &= -\frac{1}{\sigma_t^2} \|\Pi_{\perp}(x_t)\delta\|_2^2.
\end{align}

Letting $c_t = 1/\sigma_t^2>0$. Replacing the approximation by an inequality to account for local modeling error gives \eqref{eq:off_manifold_apx}.
\end{proof}

\subsection{Proof of Lemma~\ref{lem:onestep}}
\label{app:proof_lemma_onestep}
\begin{lemma}[One-step contraction of watermark residue]
Let $x_t$ and $x_t^{\mathrm{clean}}$ denote two coupled reverse trajectories initialized from the watermarked and clean states, respectively. Define the watermark residue at time $t$ as $r_t \triangleq x_t - x_t^{\mathrm{clean}}$.
Under Lemma~\ref{lem:direction} and the assumption that residue is dominated by its off-manifold component, there exists a contraction factor $\rho_t\in(0,1)$ such that
\begin{equation}
    \|r_{t-1}\|_2 \le \rho_t \|r_t\|_2 .
    \label{eq:onestep_apx}
\end{equation}
\end{lemma}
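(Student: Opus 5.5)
We follow the proof sketch given after the statement. Write one reverse step for both coupled trajectories with the same noise realization $z_t$, so the stochastic terms cancel on subtraction. The paper analyzes the probability flow ODE, which is deterministic anyway. In discretized form, with step size $h_t>0$ and drift coefficient $g_t^2$, we take
\begin{equation*}
x_{t-1} = x_t + h_t\bigl(a_t x_t + g_t^2 s_\theta(x_t,t)\bigr),
\end{equation*}
and the same update for $x_t^{\mathrm{clean}}$. Here the linear coefficient $a_t$ comes from the forward SDE and is of order $\beta_t$. Subtracting gives
\begin{equation*}
r_{t-1} = (1+h_t a_t)\,r_t + h_t g_t^2\bigl(s_\theta(x_t,t)-s_\theta(x_t^{\mathrm{clean}},t)\bigr).
\end{equation*}
The same form results from \eqref{eq:diffusion_reverse} after rewriting $\epsilon_\theta = -\sqrt{1-\bar\alpha_t}\,s_\theta$.

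The next step is to control the score difference. Split $r_t = \Pi_\perp r_t + \Pi_\parallel r_t$, using the projections at $x_t$. We use the local Gaussian approximation from the proof of Lemma~\ref{lem:direction}. The clean trajectory lies approximately on $\mathcal{M}$ at noise level $t$, so its normal score component is approximately zero. The normal component of the score difference is then approximately $-c_t \Pi_\perp r_t$ with $c_t = 1/\sigma_t^2$. By Lemma~\ref{lem:direction}, applied with $\Pi_\perp r_t$ in place of $\Pi_\perp\delta$ since the residue plays the role of the watermark, this gives
\begin{equation*}
\langle \Delta s, \Pi_\perp r_t\rangle \le -c_t\|\Pi_\perp r_t\|_2^2 .
\end{equation*}
For the tangential part, we assume the score is $L_t$-Lipschitz along tangent directions, with $L_t$ small relative to $c_t$. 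This is the natural reading, because the density is nearly flat along the manifold.

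Expanding $\|r_{t-1}\|_2^2$ then gives a bound on the normal part,
\begin{equation*}
\|\Pi_\perp r_{t-1}\|_2^2 \lesssim \bigl(1+h_t a_t - h_t g_t^2 c_t\bigr)^2\|\Pi_\perp r_t\|_2^2 ,
\end{equation*}
plus cross terms. We use the dominance assumption $\|\Pi_\parallel r_t\|_2 \le \kappa \|\Pi_\perp r_t\|_2$ with small $\kappa$ to absorb the tangential and cross terms into an $O(\kappa)$ perturbation. This yields
\begin{equation*}
\|r_{t-1}\|_2 \le \rho_t\|r_t\|_2,\qquad \rho_t = \bigl|1+h_t a_t - h_t g_t^2 c_t\bigr| + O(\kappa + h_t L_t).
\end{equation*}

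The main obstacle is showing that $\rho_t<1$. This requires the restoring term $h_t g_t^2 c_t$ to exceed the expansive term $h_t a_t$ and the error terms. It also requires the step to be small enough that $h_t g_t^2 c_t<2$, so that the step does not overshoot. We would first verify this for the Gaussian-normal model. There the normal drift is $-\tfrac12\beta_t + \beta_t/\sigma_t^2$, and since $\sigma_t^2 = 1-\bar\alpha_t<1$ the net drift is contractive, with a rate of order $\beta_t/\sigma_t^2$ under the standard schedule. Any remaining model error would then be absorbed into the approximation inequality, following the same convention as the proof of Lemma~\ref{lem:direction}.
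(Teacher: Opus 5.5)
Your argument has the same skeleton as the paper's proof. You take one Euler step of the reverse dynamics for both coupled trajectories, subtract, linearize $s_\theta(x_t,t)-s_\theta(x_t^{\mathrm{clean}},t)\approx -c_t r_t$ with $c_t=1/\sigma_t^2$, and impose a step-size condition on the resulting factor. The difference is in what you keep. The paper's step $x_{t-1}\approx x_t+\eta_t s_\theta(x_t,t)$ drops the linear drift of the VP reverse dynamics. It also applies the restoring linearization to the entire residue, tangential part included. As a result, $\rho_t=|1-\eta_t c_t|<1$ follows at once from $0<\eta_t c_t<2$. You keep the expansive factor $1+h_t a_t$ (equal to $1/\sqrt{\alpha_t}$ for the DDPM update) and split off $\Pi_\parallel r_t$. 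This makes $\rho_t<1$ a genuine condition and gives the dominance hypothesis a quantitative role. The cost is an explicit tangential Lipschitz assumption, and both arguments still end by turning $\approx$ into $\le$. Your Gaussian-normal check is right. In that model the DDPM normal factor is exactly $\sqrt{\alpha_t}\,(1-\bar\alpha_{t-1})/(1-\bar\alpha_t)\in[0,1)$, and it is nonnegative (no overshoot) because $1-\bar\alpha_t\ge\beta_t$.

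Two points need repair.

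\emph{The justification of the score-difference step is wrong as stated.} At noise level $t$ the clean state is displaced from $\sqrt{\bar\alpha_t}\,\mathcal{M}$ by roughly $\sigma_t\Pi_\perp\epsilon$. This has norm about $\sigma_t\sqrt{d-k}$ with $k=\dim\mathcal{M}$, far larger than $\|r_t\|_2$, so its normal score is large, not approximately zero. What you actually need is that, under the local Gaussian model, the normal score is affine in the normal displacement. The shared noise then cancels in the difference, provided both states have nearly the same foot point on the manifold; this is where a small $\Pi_\parallel r_t$ enters. This is the paper's ``linearization around the clean trajectory.''

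\emph{Absorbing the tangential part as an additive $O(\kappa)$ is too lossy.} That term has an $O(1)$ coefficient, while the margin $h_t(g_t^2c_t-a_t)$ is only of order $\beta_t/\sigma_t^2$. Instead, work in the flat local model (no normal--tangential mixing) and use $\|r\|_2^2=\|\Pi_\perp r\|_2^2+\|\Pi_\parallel r\|_2^2$ at both times. Set $f_\perp=|1+h_ta_t-h_tg_t^2c_t|$ and $f_\parallel\le 1+h_ta_t+h_tg_t^2L_t$. Then $\|r_{t-1}\|_2^2/\|r_t\|_2^2\le(f_\perp^2+\kappa^2 f_\parallel^2)/(1+\kappa^2)$, which is below $1$ iff $\kappa^2(f_\parallel^2-1)<1-f_\perp^2$. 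Both sides are $O(h_t)$, so the condition is step-size independent: $\kappa^2\lesssim(g_t^2c_t-a_t)/(a_t+g_t^2L_t)$. For the probability flow ODE this reads $\kappa^2\lesssim\bar\alpha_t/\bigl((1-\bar\alpha_t)(1+L_t)\bigr)$, which is mild at the paper's small $t^\ast$.

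Finally, suppose the density is nearly flat along $\mathcal{M}$, which is your own premise for small $L_t$. Then $f_\parallel\approx 1+h_ta_t>1>f_\perp$, so $\|\Pi_\parallel r_t\|_2/\|\Pi_\perp r_t\|_2$ grows along the reverse trajectory. The dominance hypothesis is therefore not self-propagating and must be re-imposed at every step when the lemma is iterated in Theorem~\ref{thm:decay}.
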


\begin{proof}
We analyze the deterministic probability flow ODE (as an idealized model for the reverse dynamics). Consider a discretized reverse step from time $t$ to $t-1$ using a first-order Euler solver:
\begin{equation}
    x_{t-1} \approx x_t + \eta_t s_\theta(x_t, t),
    \label{eq:euler_update}
\end{equation}
where $\eta_t > 0$ is an effective step size determined by the specific scheduler. Applying to both the watermarked trajectory $x_t$ and the clean trajectory $x_t^{\mathrm{clean}}$ using the same step size yields
\begin{align}
    r_{t-1} &= x_{t-1} - x_{t-1}^{\mathrm{clean}} \\
    &\approx (x_t + \eta_t s_\theta(x_t, t)) - (x_t^{\mathrm{clean}} + \eta_t s_\theta(x_t^{\mathrm{clean}}, t)) \\
    &= r_t + \eta_t \left( s_\theta(x_t, t) - s_\theta(x_t^{\mathrm{clean}}, t) \right).
    \label{eq:residue_update}
\end{align}

Next, we relate the score difference to the residue. Lemma~\ref{lem:direction} implies that, in normal directions near $\mathcal{M}$, the score acts as a restoring force. Under a first-order local linearization around the clean trajectory, we approximate
\begin{equation}
    s_\theta(x_t, t) - s_\theta(x_t^{\mathrm{clean}}, t) \approx -c_t (x_t - x_t^{\mathrm{clean}}) = -c_t r_t,
\end{equation}
where $c_t \approx 1/\sigma_t^2 > 0$ determines the strength of the restoring force derived in Lemma~\ref{lem:direction}.

Substituting this relationship into Eq.~\eqref{eq:residue_update}:
\begin{equation}
    r_{t-1} \approx r_t - \eta_t c_t r_t = (1 - \eta_t c_t) r_t.
\end{equation}
Taking the Euclidean norm, we obtain:
\begin{equation}
    \|r_{t-1}\|_2 \approx |1 - \eta_t c_t| \|r_t\|_2.
\end{equation}
Define the decay factor $\rho_t = |1 - \eta_t c_t|$. In standard diffusion schedules, the step size $\eta_t$ satisfies $0<\eta_t c_t<1$ (typically $0<\eta_t c_t<2$). Therefore, there exists $\rho_t \in (0, 1)$ such that:
\begin{equation}
    \|r_{t-1}\|_2 \le \rho_t \|r_t\|_2.
\end{equation}
\end{proof}

\subsection{Proof of Theorem~\ref{thm:decay}}
\label{app:proof_thm_decay}
\begin{theorem}[Exponential decay of watermark residue]
By combining the forward noising at timestep $t^\ast$ with the one-step contraction in Lemma~\ref{lem:onestep}, the final residue after reverse reconstruction satisfies
\begin{equation}
    \|r_0\|_2
    \le    \underbrace{\sqrt{\bar{\alpha}_{t^\ast}}}_{\text{Forward scaling}}
    \cdot \underbrace{\left(\prod_{t=1}^{t^\ast}\rho_t\right)}_{\text{Reverse contraction}}
    \cdot
    \Delta .
\end{equation}
Moreover, for any detection threshold $\tau>0$, there exists a minimum diffusion steps $t^\ast_{\min}$ such that for all $t^\ast > t^\ast_{\min}$, the watermark becomes undetectable ($S(\widehat{x}_0;k)<\tau$).
\end{theorem}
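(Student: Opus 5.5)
The argument has two parts: a residue recursion for the norm bound, then a continuity/margin argument for the detection claim.

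\emph{Initial residue.} Couple the two forward noisings. Take the watermarked and clean inputs, $x_w = x+\delta$ and $x$, and diffuse both to $t^\ast$ using the \emph{same} Gaussian draw $\epsilon$ in \eqref{eq:diff_erase_forward}. The noise terms then cancel exactly, and the initial residue is
\begin{equation}
r_{t^\ast} = x_{t^\ast}-x^{\mathrm{clean}}_{t^\ast} = \sqrt{\bar{\alpha}_{t^\ast}}\,\delta ,
\end{equation}
so $\|r_{t^\ast}\|_2 \le \sqrt{\bar{\alpha}_{t^\ast}}\,\Delta$ by $\Delta$-imperceptibility. This gives the forward-scaling factor.

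\emph{Reverse contraction.} Both trajectories are run through the same reverse dynamics: the probability-flow ODE, or the stochastic sampler with shared $z_t$ so that the noise again cancels in the difference. The residue $r_t$ is still $\sqrt{\bar\alpha_{t^\ast}}\delta$ at the start and remains predominantly off-manifold, which is the standing assumption carried along the trajectory. Lemma~\ref{lem:onestep} therefore applies at every step $t=t^\ast,\dots,1$. Induction on $t$ gives
\begin{equation}
\|r_0\|_2 \le \Big(\prod_{t=1}^{t^\ast}\rho_t\Big)\|r_{t^\ast}\|_2 ,
\end{equation}
and combining with the initial residue yields the displayed bound.

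\emph{Detection claim.} I would add a regularity assumption on the detector: $S(\cdot;k)$ is $L$-Lipschitz near the clean reconstruction. I would also need that the clean reference output $\widehat{x}_0^{\mathrm{clean}}$ carries no watermark with margin, i.e.\ $S(\widehat{x}_0^{\mathrm{clean}};k)\le \tau-\gamma$ for some $\gamma>0$. This holds because the clean trajectory never contained $\delta$ and the detector has a low false-positive rate on clean audio. Then
\begin{equation}
S(\widehat{x}_0;k)\le \tau-\gamma+L\|r_0\|_2 ,
\end{equation}
and removal follows once $\sqrt{\bar\alpha_{t^\ast}}\prod_{t\le t^\ast}\rho_t\,\Delta<\gamma/L$. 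The map $t^\ast\mapsto\sqrt{\bar\alpha_{t^\ast}}\prod_{t=1}^{t^\ast}\rho_t$ is strictly decreasing, since $\bar\alpha_t$ decreases and each $\rho_t<1$. It also tends to $0$, because $\bar\alpha_t\to 0$ under standard schedules. So I would set
\begin{equation}
t^\ast_{\min}=\min\Big\{t:\sqrt{\bar\alpha_t}\prod_{s=1}^{t}\rho_s\,\Delta<\gamma/L\Big\},
\end{equation}
and monotonicity gives the conclusion for all $t^\ast>t^\ast_{\min}$. If the detector is stated on the waveform, the vocoder and mel map would be absorbed into $L$ as Lipschitz maps.

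\emph{Main obstacle.} The detection step is the hard part. The theorem states a margin $\tau$ only for watermarked inputs, so I must introduce the Lipschitz constant and the clean margin $\gamma$ as additional, explicitly stated assumptions. A second difficulty is that Lemma~\ref{lem:onestep} needs the off-manifold dominance to persist at every step. I would justify this inductively: contraction acts mainly on the normal component, so the dominance ratio does not worsen. I would state this invariant as part of the hypotheses rather than prove it rigorously.
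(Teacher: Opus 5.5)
Your proposal is correct and follows the paper's proof. The shared-noise coupling gives $r_{t^\ast}=\sqrt{\bar\alpha_{t^\ast}}\delta$, Lemma~\ref{lem:onestep} is iterated down to $t=0$, and an $L$-Lipschitz detector turns the residue bound into a threshold on $t^\ast$. Your explicit clean margin $\gamma$ and your direct monotonicity argument on $\sqrt{\bar\alpha_{t^\ast}}\prod_{t\le t^\ast}\rho_t$ are small tightenings of that same argument, not a different route: the paper simply takes $S(x^{\mathrm{clean}};k)\approx 0$ (so $\gamma\approx\tau$), and it bounds via $\bar\rho=\max_{t\le t^\ast}\rho_t$ and logarithms even though $\bar\rho$ itself depends on $t^\ast$.
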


\begin{proof}
We first analyze the residue at the attack timestep $t^\ast$. Let $x_0$ denote the clean audio and $x_0^w = x_0 + \delta$ the watermarked audio, with $\|\delta\|_2 \le \Delta$. We adopt a coupled forward process where both trajectories share the same Gaussian noise $\epsilon \sim \mathcal{N}(\mathbf{0},\mathbf{I})$. At timestep $t^\ast$,

\begin{align}
    x_{t^\ast} &= \sqrt{\bar{\alpha}_{t^\ast}} x_0 + \sqrt{1-\bar{\alpha}_{t^\ast}} \epsilon, \\
    x_{t^\ast}^w &= \sqrt{\bar{\alpha}_{t^\ast}} (x_0 + \delta) + \sqrt{1-\bar{\alpha}_{t^\ast}} \epsilon.
\end{align}
Define the residue at $t^\ast$ as $r_{t^\ast} \triangleq x_{t^\ast}^w - x_{t^\ast}$. Then
\begin{equation}
    r_{t^\ast} = \sqrt{\bar{\alpha}_{t^\ast}} \delta.
\end{equation}
Consequently, the norm of the perturbation at $t^\ast$ is scaled by the signal decay factor:
\begin{equation}
    \|r_{t^\ast}\|_2
    =
    \sqrt{\bar{\alpha}_{t^\ast}}\,\|\delta\|_2
    \le
    \sqrt{\bar{\alpha}_{t^\ast}}\,\Delta .
    \label{eq:forward_bound}
\end{equation}

Applying Lemma~\ref{lem:onestep} iteratively from $t^\ast$ down to $0$ yields
\begin{equation}
\|r_0\|_2 \le \rho_1 \|r_1\|_2 \le \dots \le \left(\prod_{j=1}^{t^\ast}\rho_j\right)\|r_{t^\ast}\|_2.
\end{equation}
Substituting \eqref{eq:forward_bound} gives
\begin{equation}
\|r_0\|_2 \le \sqrt{\bar{\alpha}_{t^\ast}} \left(\prod_{j=1}^{t^\ast}\rho_j\right) \Delta.
\label{eq:product_decay}
\end{equation}
\end{proof}

\paragraph{Bound on diffusion steps $t^\ast_{\min}$.}
To derive the lower bound on $t^\ast$ required for watermark removal $S(\widehat{x}_0;k)<\tau$, we link the residue norm to the detection statistic. Assume the detection function $S(\cdot; k)$ is $L$-Lipschitz continuous. For clean audio (no watermark), $S(x^{\mathrm{clean}}; k) \approx 0$. For the reconstructed output $\widehat{x}_0$, we have
\begin{align}
    S(\widehat{x}_0;k)
    &\le
    |S(\widehat{x}_0;k)-S(x^{\mathrm{clean}};k)|
    + S(x^{\mathrm{clean}};k) \nonumber\\
    &\le
    L\|\widehat{x}_0-x^{\mathrm{clean}}\|_2 + S(x^{\mathrm{clean}};k) \nonumber\\
    &\approx
    L\|r_0\|_2 \nonumber\\
    &\le
    L\Delta \sqrt{\bar{\alpha}_{t^\ast}}
    \left(\prod_{t=1}^{t^\ast}\rho_t\right),
    \label{eq:S_bound}
\end{align}
where the final inequality uses \eqref{eq:product_decay}.

Let $\bar{\rho}\triangleq \max_{t\in\{1,\dots,t^\ast\}}\rho_t < 1$, so that $\prod_{t=1}^{t^\ast}\rho_t \le \bar{\rho}^{t^\ast}$. A sufficient condition for $S(\widehat{x}_0;k)<\tau$ is therefore
\begin{equation}
    L\Delta \sqrt{\bar{\alpha}_{t^\ast}}\, \bar{\rho}^{t^\ast} < \tau.
    \label{eq:tau_condition}
\end{equation}

Taking logarithms yields
\begin{equation}
    \ln(L\Delta) + \frac{1}{2}\ln \bar{\alpha}_{t^\ast} + t^\ast \ln \bar{\rho} < \ln \tau.
\end{equation}

Since $\bar{\alpha}_{t^\ast}\in(0,1)$ and $\bar{\rho}\in(0,1)$, their logarithms are negative. Rearranging gives the equivalent condition
\begin{equation}
    t^\ast |\ln \bar{\rho}| + \tfrac{1}{2}|\ln \bar{\alpha}_{t^\ast}|
    >
    \ln\!\left(\frac{L\Delta}{\tau}\right).
    \label{eq:t_min_condition}
\end{equation}

Because $\bar{\alpha}_{t^\ast}$ decreases monotonically with $t^\ast$, both terms on the left-hand side increase with $t^\ast$. Hence there exists a minimal $t^\ast_{\min}$ satisfying \eqref{eq:t_min_condition}, and any $t^\ast \ge t^\ast_{\min}$ is sufficient to ensure $S(\widehat{x}_0;k)<\tau$ under the stated assumptions.

\section{Implementation Details}
\subsection{DiffErase instantiations}\label{apx:differase_detail}
We implement DiffErase with two complementary backbones:

\paragraph{\textsc{DiffErase-mel}.}
This variant is built on \textsc{diffuser} library~\citep{von-platen-etal-2022-diffusers} and performs diffusion directly in the mel-spectrogram domain. We use a UNet2DModel as the denoiser, treating the mel-spectrogram as a single-channel image. The mel-spectrogram is computed with 80 mel bins. For waveform reconstruction, we use BigVGAN~\citep{lee2023bigvgan} (\texttt{bigvgan\_v2\_22khz\_80band\_256x}) as the vocoder.

\paragraph{\textsc{DiffErase-latent}.}
This variant is built on the AudioLDM pipeline~\citep{liu2023audioldm}. Mel-spectrograms (64 mel bins) are first encoded into a latent space using a pretrained AutoencoderKL with latent channel dimension of 8. Diffusion is then performed in the VAE latent space using a UNet with the following configuration: image size 64, base channels 128, 2 residual blocks per stage, channel multipliers $[1, 2, 3, 5]$, and attention at resolutions $\{8, 4, 2\}$. The reconstructed mel-spectrogram is converted back to waveform using HiFi-GAN~\citep{kong2020hifi}.

\subsection{Baseline attacks}\label{apx:attack_baseline}
\paragraph{Signal-level attacks.}
Following the setting in \citet{o2025deep}, pitch shifting uses a random shift in $[-1,1]$ semitones, and time stretching applies a random speed factor in $[0.95, 1.05]$. For frequency filtering, we apply a low-pass filter with a 4000\,Hz cutoff and a high-pass filter with a 500\,Hz cutoff. Additive noise is Gaussian with standard deviation $\sigma = 0.01$.

\paragraph{Codec-based attacks.}
We evaluate a traditional codec and a neural-based codec. MP3 compression is performed using FFmpeg at 32\,kbps. EnCodec~\citep{defossez2022high} uses 24\,kbps bandwidth.

\begin{table*}[h]
\centering
\caption{\textbf{Comparison with baselines} on the music domain. Left: audio quality metrics (higher is better). Right: watermark detection measured by $\mathrm{TPR}@1\%\mathrm{FPR}$ (lower is better); \ding{55} indicates $\mathrm{TPR} < 0.01$.}

\label{tab:main_results_music_compact}
\setlength{\tabcolsep}{4pt}
\renewcommand{\arraystretch}{0.95}
\resizebox{0.8\textwidth}{!}{
\begin{tabular}{l l | c c | c c c c c}
\toprule
\multirow{2}{*}{\textbf{Type}} & \multirow{2}{*}{\textbf{Attack}}
& \multicolumn{2}{c|}{\textbf{Audio Quality}}
& \multicolumn{5}{c}{\textbf{Watermark Detection}($\mathrm{TPR}@1\%\mathrm{FPR}\downarrow$)}\\
\cmidrule(lr){3-4}\cmidrule(lr){5-9}
& & ViSQOL$\uparrow$ & MUSHRA$\uparrow$
& AudioSeal & WavMark & TimbreWM & Perth & SilentCipher \\
\midrule

\multirow{5}{*}{\textbf{Signal-level}}
& Pitch shift        & 4.061 & 78.33 & \ding{55} & \ding{55} & \ding{55} & 0.10 & \ding{55} \\
& Time stretch       & 4.109 & 80.16 & \ding{55} & 0.83 & 0.87 & 1.00 & \ding{55} \\
& Low-pass filter    & 3.203 & 87.28 & 1.00 & 1.00 & 1.00 & 1.00 & 0.51 \\
& High-pass filter   & 3.673 & 78.61 & 1.00 & 1.00 & 1.00 & 0.99 & 0.58 \\
& Additive noise     & 2.778 & 25.40 & 0.43 & \ding{55} & 0.12 & 0.43 & \ding{55} \\
\cmidrule(lr){1-9}

\multirow{2}{*}{\textbf{Codec}}
& MP3                & 4.505 & 94.39 & 1.00 & 0.84 & 0.94 & 1.00 & 0.42 \\
& EnCodec            & 4.321 & 88.66 & 0.99 & \ding{55} & \ding{55} & 0.89 & \ding{55} \\
\cmidrule(lr){1-9}

\multirow{3}{*}{\textbf{Adaptive}}
& Square Attack      & 2.960 & 31.24 & 0.18 & \ding{55} & 0.36 & 0.21 & \ding{55} \\

& \cellcolor{gray!15}\textbf{\textsc{DiffErase-latent}}
& \cellcolor{gray!15}4.163
& \cellcolor{gray!15}91.20
& \cellcolor{gray!15}\ding{55}
& \cellcolor{gray!15}\ding{55}
& \cellcolor{gray!15}0.01
& \cellcolor{gray!15}0.35
& \cellcolor{gray!15}\ding{55} \\

& \cellcolor{gray!15}\textbf{\textsc{DiffErase-mel}}
& \cellcolor{gray!15}3.938
& \cellcolor{gray!15}86.31
& \cellcolor{gray!15}\ding{55}
& \cellcolor{gray!15}\ding{55}
& \cellcolor{gray!15}0.01
& \cellcolor{gray!15}0.46
& \cellcolor{gray!15}\ding{55} \\

\bottomrule
\end{tabular}
}
\end{table*}
\vspace{-4mm}

\begin{table*}[h]
\centering
\caption{\textbf{Comparison with baselines} on the environment sound domain. Left: audio quality metrics (higher is better). Right: watermark detection measured by $\mathrm{TPR}@1\%\mathrm{FPR}$ (lower is better); \ding{55} indicates $\mathrm{TPR} < 0.01$.}

\label{tab:main_results_environment_compact}
\setlength{\tabcolsep}{4pt}
\renewcommand{\arraystretch}{0.95}
\resizebox{0.8\textwidth}{!}{
\begin{tabular}{l l | c c | c c c c c}
\toprule
\multirow{2}{*}{\textbf{Type}} & \multirow{2}{*}{\textbf{Attack}}
& \multicolumn{2}{c|}{\textbf{Audio Quality}}
& \multicolumn{5}{c}{\textbf{Watermark Detection}($\mathrm{TPR}@1\%\mathrm{FPR}\downarrow$)}\\
\cmidrule(lr){3-4}\cmidrule(lr){5-9}
& & ViSQOL$\uparrow$ & MUSHRA$\uparrow$
& AudioSeal & WavMark & TimbreWM & Perth & SilentCipher \\
\midrule

\multirow{5}{*}{\textbf{Signal-level}}
& Pitch shift        & 4.061 & 76.20 & \ding{55} & \ding{55} & \ding{55} & 0.03 & \ding{55} \\
& Time stretch       & 4.347 & 87.06 & \ding{55} & 0.85 & 0.97 & 1.00 & \ding{55} \\
& Low-pass filter    & 3.102 & 92.53 & 1.00 & 0.94 & 0.13 & 1.00 & 0.32 \\
& High-pass filter   & 3.706 & 84.80 & 1.00 & 1.00 & 0.97 & 1.00 & 0.59 \\
& Additive noise     & 1.443 & 31.43 & 0.34 & \ding{55} & 0.11 & 0.39 & \ding{55} \\
\cmidrule(lr){1-9}

\multirow{2}{*}{\textbf{Codec}}
& MP3                & 3.979 & 97.21 & 1.00 & 0.65 & 0.82 & 0.99 & 0.22 \\
& EnCodec            & 4.334 & 94.81 & 0.96 & \ding{55} & \ding{55} & 0.90 & \ding{55} \\
\cmidrule(lr){1-9}

\multirow{3}{*}{\textbf{Adaptive}}
& Square Attack      & 2.365 & 58.20 & 0.11 & \ding{55} & 0.06 & 0.38 & \ding{55} \\

& \cellcolor{gray!15}\textbf{\textsc{DiffErase-latent}}
& \cellcolor{gray!15}3.308
& \cellcolor{gray!15}87.07
& \cellcolor{gray!15}\ding{55}
& \cellcolor{gray!15}\ding{55}
& \cellcolor{gray!15}\ding{55}
& \cellcolor{gray!15}0.23
& \cellcolor{gray!15}\ding{55} \\

& \cellcolor{gray!15}\textbf{\textsc{DiffErase-mel}}
& \cellcolor{gray!15}3.952
& \cellcolor{gray!15}86.48
& \cellcolor{gray!15}\ding{55}
& \cellcolor{gray!15}\ding{55}
& \cellcolor{gray!15}\ding{55}
& \cellcolor{gray!15}0.19
& \cellcolor{gray!15}\ding{55} \\
\bottomrule
\end{tabular}
}
\end{table*}

\paragraph{Adaptive attack.}
We implement Square attack~\citep{andriushchenko2020square} in the spectrogram domain following \citet{liu2024audiomarkbench}. For speech, we use a query budget of 10,000 and perturbation bound $\epsilon=0.02$. For music and environmental sounds, we found $\epsilon=0.02$ insufficient for successful attacks and increased it to $\epsilon=0.2$.

\subsection{Dataset details}\label{apx:dataset_detail}
We evaluate DiffErase across three audio domains to assess generalization. For speech, we use LibriSpeech~\citep{panayotov2015librispeech}, an English corpus derived from audiobooks containing 1,000 hours of speech. We train on the \textit{train-clean-100} subset (100 hours) and evaluate on 100 randomly sampled clips from \textit{test-clean}. For environmental sounds, we use Clotho~\citep{drossos2020clotho}, an audio captioning dataset. We discard the captions and reserve 100 samples for evaluation, using the reminder for training. For music, we use \textit{FMA-small}~\citep{defferrard2016fma} from the Free Music Archive, which contains songs across various genres. We use all samples for training except 100 samples for evaluation.

\subsection{Subjective listening test}
\label{apx:subjective_test}
We conduct a MUSHRA listening test following ITU-R BS.1534 to evaluate perceptual quality. We randomly sample 5 audio clips per domain and include all attack samples in each trial. 18 participants completed the study, filtered unreliable evaluations using low-quality anchor samples, and 16 valid participants were retained for analysis. Participants rate the quality of processed audio samples on a scale of 0 to 100, with watermarked audio provided as the reference. Evaluations are conducted using the open-source webMUSHRA platform~\citep{schoeffler2018webmushra} locally.

\section{Additional experimental results}\label{apx:training_detail}
\label{apx:additional_results}
\subsection{Comparison with baselines on music and environmental sounds}\label{apx:comparison_on_music_env}
Table~\ref{tab:main_results_music_compact} and Table~\ref{tab:main_results_environment_compact} present comparison results on the music and environmental sound domains, respectively. The overall performance is consistent with the speech domain. 

Signal-level attacks show limited effectiveness. Pitch shifting removes most watermarks but degrades quality (MUSHRA of 78.33 on music and 76.20 on environmental). Additive noise achieves partial removal but introduces severe distortions (MUSHRA of 25.40 on music and 31.43 on environment). Time stretching and frequency filtering remain largely ineffective.

Codec-based attacks preserve high quality, but fail to remove watermarks. MP3 compression maintains MUSHRA above 94 but fails to remove most watermarks. EnCodec successfully removes WavMark, TimbreWM, and SilentCipher while preserving audio quality, but fails against AudioSeal and Perth.

Even though Square attack shows relatively good removal performance on speech, it fails on music and environmental sounds despite increasing the perturbation bound to $\epsilon=0.2$, which causes substantial quality degradation (MUSHRA of 31.24 on music and 58.20 on environment).

DiffErase achieves strong watermark removal across both domains while preserving competitive quality. On music, \textsc{DiffErase-latent} achieves ViSQOL of 4.16 and MUSHRA of 91.20, removing all watermarks except Perth ($\mathrm{TPR}=0.35$). On environmental sounds, \textsc{DiffErase-mel} achieves ViSQOL of 3.95 and MUSHRA of 86.48. Perth remains partially detectable owing to its stronger embedded perturbations observed in Figure~\ref{fig:l2_distance}.

\begin{figure*}[t]
    \centering
    \includegraphics[width=0.32\textwidth]{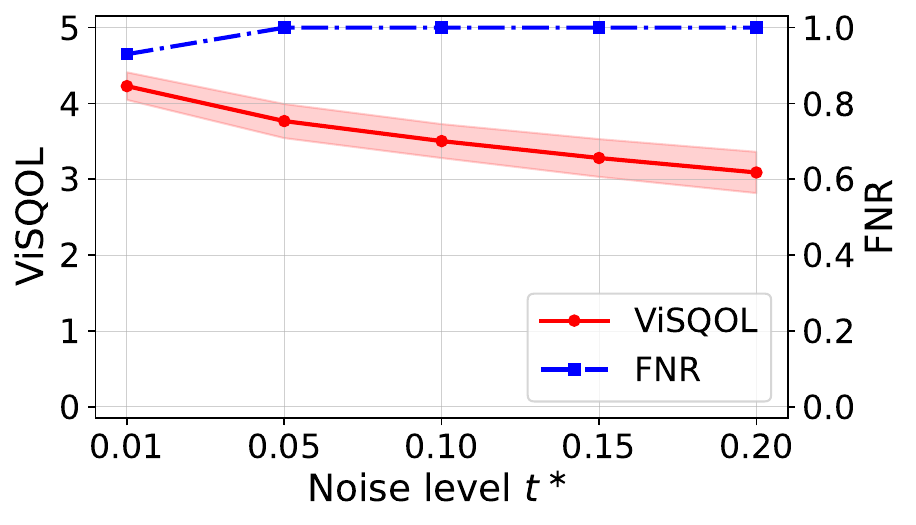}\hfill
    \includegraphics[width=0.32\textwidth]{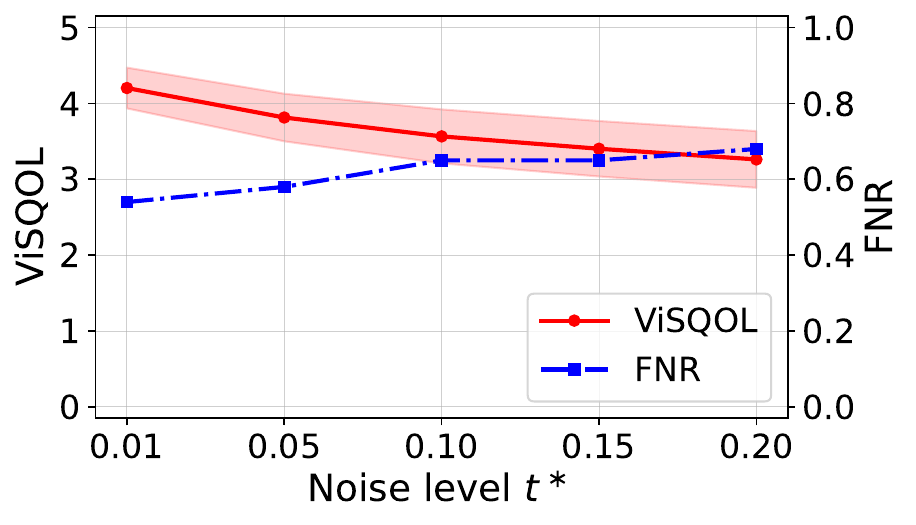}\hfill
    \includegraphics[width=0.32\textwidth]{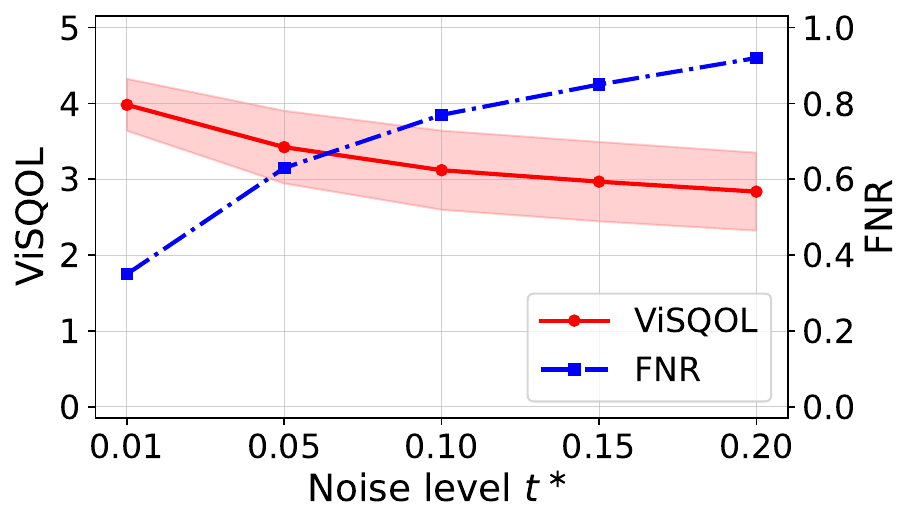}
    \caption{\textbf{Effect of noise level $t^*$ for \textsc{DiffErase-latent}.} Trade-off between audio quality (ViSQOL, left axis) and watermark removal (FNR $= 1 - \mathrm{TPR}$, right axis), evaluated on Perth. \textbf{Left:} Speech. \textbf{Middle:} Music. \textbf{Right:} Environment.}
    \label{fig:noise_level_tradeoff_latent}
\end{figure*}

\subsection{Noise level trade-off for \textsc{DiffErase-latent}}~\label{apx:noise_level_tradeoff_latent}
Figure~\ref{fig:noise_level_tradeoff_latent} shows the quality-removal trade-off for \textsc{DiffErase-latent}. The trends are similar: increasing $t^\ast$ improves watermark removal (higher $\mathrm{FNR}$) while reducing audio quality (lower ViSQOL). On speech, Perth becomes undetectable at $t^\ast \ge 0.05$. On music and environmental sounds, complete removal requires higher noise levels. Compared to \textsc{DiffErase-mel}, \textsc{DiffErase-latent} achieves comparable removal at similar noise levels but with slightly lower ViSQOL, likely due to the additional compression from the VAE encoder.

\begin{table}[t]
\centering

\begin{minipage}[t]{0.49\columnwidth}
\centering
\caption{\textbf{Ablation study} on music.}
\label{tab:differase_ablation_music}
\footnotesize
\setlength{\tabcolsep}{4pt}
\renewcommand{\arraystretch}{0.95}
\resizebox{\linewidth}{!}{
\begin{tabular}{l c c c}
\toprule
\textbf{Method} & \textbf{TimbreWM} & \textbf{Perth} & \textbf{WavMark} \\
\midrule
\multicolumn{4}{l}{\textbf{(i) mel-to-waveform}} \\
\addlinespace[1pt]
Griffin-Lim algorithm (GLA) only                       & 1.00 & 1.00 & 1.00 \\
\textsc{DiffErase-mel} + GLA & 0.00 & 0.46 & 0.00  \\
\textsc{DiffErase-latent} + GLA & 0.00 & 0.00 & 0.00  \\
\cmidrule(lr){1-4}
\multicolumn{4}{l}{\textbf{(ii) diffusion sampler}} \\
\addlinespace[1pt]
\textsc{DiffErase-mel} (DDPM)  & 0.01 & 0.46 & 0.00 \\
\textsc{DiffErase-mel} (DDIM)  & 0.02 & 0.75 & 0.00 \\
\textsc{DiffErase-latent} (DDPM)  & 0.01 & 0.35 & 0.00 \\
\textsc{DiffErase-latent} (DDIM)  & 0.02 & 0.44 & 0.00 \\
\bottomrule
\end{tabular}
}
\end{minipage}\hfill
\begin{minipage}[t]{0.49\columnwidth}
\centering
\caption{\textbf{Ablation study} on environment sound.}
\label{tab:differase_ablation_env}
\footnotesize
\setlength{\tabcolsep}{4pt}
\renewcommand{\arraystretch}{0.95}
\resizebox{\linewidth}{!}{
\begin{tabular}{l c c c}
\toprule
\textbf{Method} & \textbf{TimbreWM} & \textbf{Perth} & \textbf{WavMark} \\
\midrule
\multicolumn{4}{l}{\textbf{(i) mel-to-waveform}} \\
\addlinespace[1pt]
Griffin-Lim algorithm (GLA) only                       & 0.89 & 1.00 & 1.00 \\
\textsc{DiffErase-mel} + GLA & 0.00 & 0.58 & 0.00  \\
\textsc{DiffErase-latent} + GLA & 0.00 & 0.00 & 0.00  \\
\cmidrule(lr){1-4}
\multicolumn{4}{l}{\textbf{(ii) diffusion sampler}} \\
\addlinespace[1pt]
\textsc{DiffErase-mel} (DDPM)  & 0.00 & 0.19 & 0.00 \\
\textsc{DiffErase-mel} (DDIM)  & 0.00 & 0.34 & 0.00 \\
\textsc{DiffErase-latent} (DDPM)  & 0.00 & 0.23 & 0.00 \\
\textsc{DiffErase-latent} (DDIM)  & 0.00 & 0.23 & 0.00 \\
\bottomrule
\end{tabular}
}
\end{minipage}
\end{table}

\begin{table}[t]
\centering
\caption{\textbf{Ablation on audio representation.} We compare DiffErase with diffusion on different representations.}
\label{tab:repr_ablation}
\setlength{\tabcolsep}{4pt}
\renewcommand{\arraystretch}{0.95}
\resizebox{0.8\columnwidth}{!}{
\begin{tabular}{l | c c c | c c c c c}
\toprule
\multirow{2}{*}{\textbf{Representation}}
& \multicolumn{3}{c|}{\textbf{Audio Quality}}
& \multicolumn{5}{c}{\textbf{Watermark Detection}($\mathrm{TPR}@1\%\mathrm{FPR}\downarrow$)}\\
\cmidrule(lr){2-4}\cmidrule(lr){5-9}
& SQUIM-MOS$\uparrow$ & ViSQOL$\uparrow$ & MUSHRA$\uparrow$
& AudioSeal & WavMark & TimbreWM & Perth & SilentCipher \\
\midrule
Waveform           & 2.541 & 1.996 & 71.21 & \ding{55} & \ding{55} & \ding{55} & \ding{55} & \ding{55} \\
Linear spectrogram & 3.825 & 3.857 & 79.66 & 1.00 & 1.00 & 1.00 & 0.97 & 0.35 \\
\rowcolor{gray!15}
Mel-spectrogram    & 4.423 & 3.961 & 93.81 & \ding{55} & \ding{55} & \ding{55} & \ding{55} & \ding{55} \\
\rowcolor{gray!15}
Mel-latent         & 4.214 & 3.477 & 87.73 & \ding{55} & \ding{55} & \ding{55} & \ding{55} & \ding{55} \\
\bottomrule
\end{tabular}
}
\end{table}

\subsection{Ablation study on music and environmental sounds}\label{apx:more_ablation}

Tables~\ref{tab:differase_ablation_music} and~\ref{tab:differase_ablation_env} extend the ablation study to the music and environmental sound domains. Consistent with the speech results, Griffin--Lim reconstruction alone does not remove watermarks ($\mathrm{TPR} = 1.00$ for most systems). Adding diffusion substantially improves removal. DDPM consistently outperforms DDIM across both domains, particularly for Perth. On music, DDPM reduces Perth to $\mathrm{TPR} = 0.46$ while DDIM only achieves 0.75. This confirms that the finer-grained denoising trajectory of DDPM better suppresses watermark residues.

\subsection{Ablation on attack representation}\label{apx:ablation_on_representation}
We compare DiffErase operating on mel-spectrograms against alternative representations: (i) waveform using DiffWave~\citep{kong2020diffwave}, and (ii) linear spectrogram with phase preservation following WavePurifier~\citep{guo2024wavepurifier}. 

As shown in Table~\ref{tab:repr_ablation}, waveform diffusion successfully removes watermark but produces low-quality audio (SQUIM-MOS of 2.541, MUSHRA of 71.21). The high dimensionality of raw waveforms leads to over-smoothed content and temporal drift, failing to preserve perceptual details. This is evident in Figure~\ref{fig:ablation_spec}(c), where harmonic structures are blurred, and in Figure~\ref{fig:ablation_wav}(b), where the reconstructed waveform significantly deviates from the original envelope. Linear spectrogram diffusion improves quality but remains inferior to the mel-spectrogram. Its reconstruction relies on reusing the original phase, but the regenerated magnitude may be inconsistent with the preserved phase, producing audible artifacts, as shown in Figure~\ref{fig:ablation_spec}(d) and Figure~\ref{fig:ablation_wav}(c). In contrast, mel-spectrogram diffusion achieves the best trade-off: complete watermark removal while maintaining high audio quality. As shown in Figure~\ref{fig:ablation_spec}(e), DiffErase preserves the main spectral structure while removing watermark perturbations, enabling stable diffusion and high-quality reconstruction via modern neural vocoders. This is further confirmed in Figure~\ref{fig:ablation_wav}(d), where the reconstructed waveform closely follows the original temporal envelope.

\begin{figure}[h]
\centering
\begin{subfigure}[t]{0.18\textwidth}
    \centering
    \includegraphics[width=0.9\linewidth]{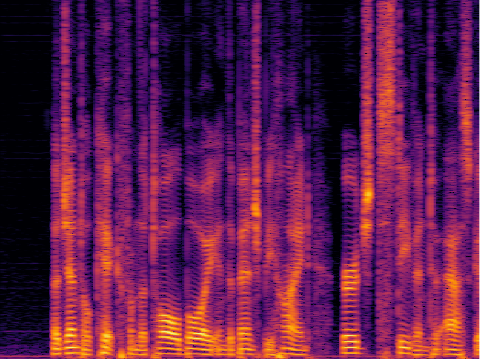}
    \caption*{(a) Original}
\end{subfigure}
\hfill
\begin{subfigure}[t]{0.18\textwidth}
    \centering
    \includegraphics[width=0.9\linewidth]{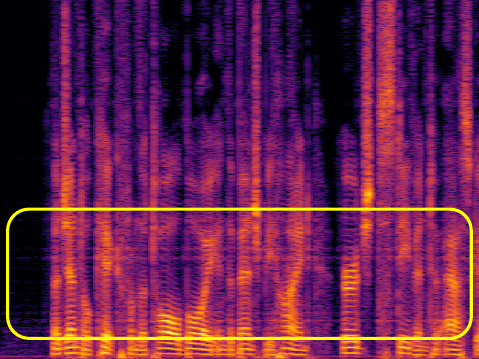}
    \caption*{(b) Watermarked}
\end{subfigure}
\hfill
\begin{subfigure}[t]{0.18\textwidth}
    \centering
    \includegraphics[width=0.9\linewidth]{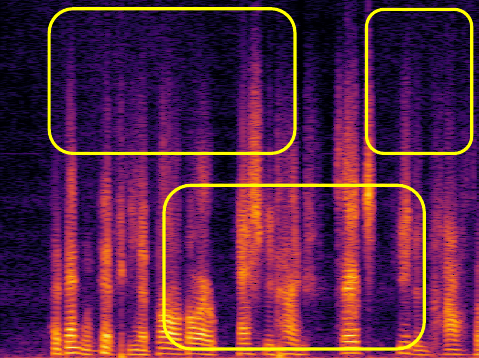}
    \caption*{(c) DiffWave}
\end{subfigure}
\hfill
\begin{subfigure}[t]{0.18\textwidth}
    \centering
    \includegraphics[width=0.9\linewidth]{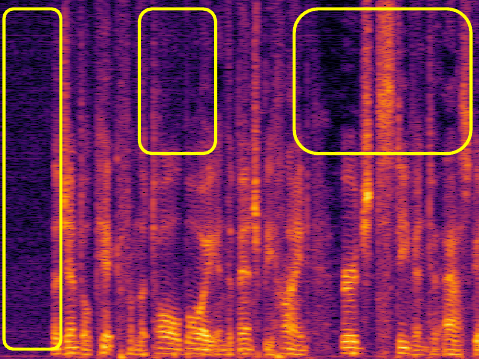}
    \caption*{(d) WavePurifier}
\end{subfigure}
\hfill
\begin{subfigure}[t]{0.18\textwidth}
    \centering
    \includegraphics[width=0.9\linewidth]{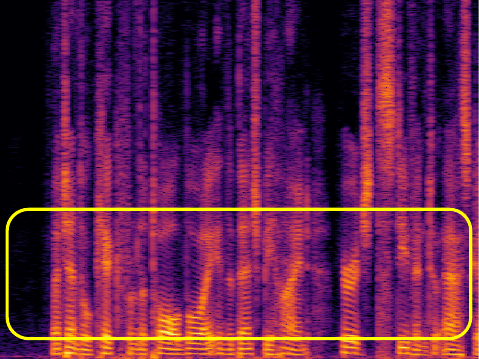}
    \caption*{(e) DiffErase}
\end{subfigure}

\caption{\textbf{Mel-spectrogram visualization} across different attack representations. DiffWave (c) over-smooths harmonic details. WavePurifier (d) introduces audible artifacts. DiffErase (e) preserves the spectral structure while removing watermark perturbations.}
\label{fig:ablation_spec}
\end{figure}

\begin{figure}[h]
\centering
\begin{subfigure}[t]{0.48\textwidth}
    \centering
    \includegraphics[width=1\linewidth]{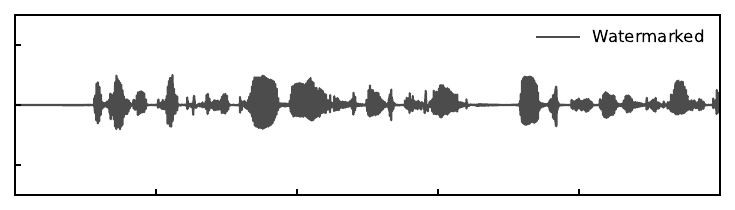}
    \caption*{(a) Watermarked}
\end{subfigure}
\hfill
\begin{subfigure}[t]{0.48\textwidth}
    \centering
    \includegraphics[width=1\linewidth]{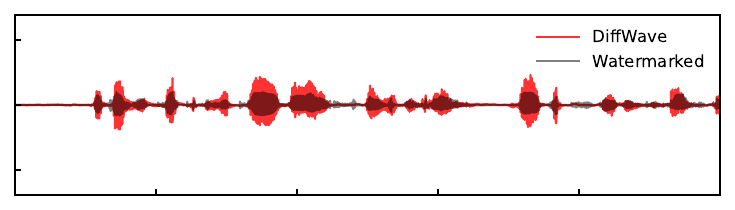}
    \caption*{(b) DiffWave}
\end{subfigure}

\vspace{2mm}

\begin{subfigure}[t]{0.48\textwidth}
    \centering
    \includegraphics[width=1\linewidth]{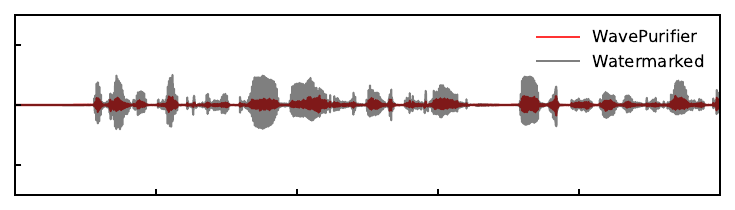}
    \caption*{(c) WavePurifier}
\end{subfigure}
\hfill
\begin{subfigure}[t]{0.48\textwidth}
    \centering
    \includegraphics[width=1\linewidth]{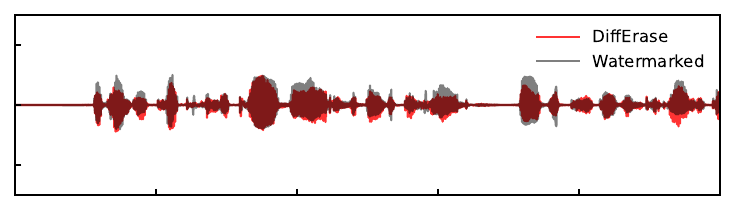}
    \caption*{(d) DiffErase}
\end{subfigure}

\caption{\textbf{Waveform visualization} across different attack representations. Red waveforms show reconstructed audio, and gray shows input watermarked audio. DiffErase (d) closely matches the original temporal envelope.}
\label{fig:ablation_wav}
\end{figure}

\end{document}